\newtheorem{thm}{Theorem}
\newtheorem{prp}{Proposition}
\journal{arXiv}
\begin{document}

\begin{frontmatter}

\title{A Data Propagation Model for Wireless Gossiping}

%% Group authors per affiliation:
\author{T.M.M. Meyfroyt, S.C. Borst, O.J. Boxma}
\address{Eindhoven University of Technology\\
P.O. Box 513, 5600 MB Eindhoven, The Netherlands\\
\{t.m.m.meyfroyt, s.c.borst, o.j.boxma\}@tue.nl}

%% or include affiliations in footnotes:
\author{D. Denteneer}
\address{Philips Research\\
HTC 34, 5656 AE Eindhoven, The Netherlands\\
dee.denteneer@philips.com}
\begin{abstract}
Wireless sensor networks require communication protocols for efficiently propagating data in a distributed fashion. The Trickle algorithm is a popular protocol serving as the basis for many of the current standard communication protocols.  In this paper we develop a mathematical model describing how Trickle propagates new data across a network consisting of nodes placed on a line. The model is analyzed and asymptotic results on the hop count and end-to-end delay distributions in terms of the Trickle parameters and network density are given. Additionally, we show that by only a small extension of the Trickle algorithm the expected end-to-end delay can be greatly decreased. Lastly, we demonstrate how one can derive the exact hop count and end-to-end delay distributions for small network sizes.
\end{abstract}

\begin{keyword}
Analytical model, Markov renewal process, wireless communication, gossip protocol, end-to-end delay, Trickle algorithm 
\MSC[2010] 60K20 \sep  90B18
\end{keyword}

\end{frontmatter}

\section{Introduction}
In recent years wireless sensor networks have been quickly growing in popularity. In these networks  inexpensive autonomous sensor units, called nodes, gather data, which they exchange with each other through wireless transmissions. Wireless sensor networks have various applications, for example in health care, area- and industrial monitoring \cite{trick1}. 

Typically, these networks are communication, computation, memory and energy constrained and therefore require appropriate distributed communication protocols. The requirements for a good communication protocol are threefold. First, it should be able to quickly disseminate and collect data within the network. Second, this should be done as efficiently as possible, meaning that the number of transmissions in the network should be as low as possible. Third, communication should be reliable; new data should eventually be received by all the nodes in the network. Several communication protocols have been proposed in recent years for this purpose, see for example \cite{trick13, trick17, trick3, trick11, trick12}. 

In \cite{trick3} the Trickle algorithm has been proposed in order to effectively and efficiently distribute and maintain information in such networks. Trickle relies on a ``polite gossip" policy to quickly propagate updates, while minimizing the number of redundant transmissions. Because of its broad applicability, Trickle has been documented in its own IETF RFC 6206 \cite{trick4}. Moreover, it has been standardized as part of the IPv6 Routing Protocol for Low power and lossy networks \cite{trick2} and the Multicast Protocol for Low power and lossy networks \cite{trick9}.

Because the Trickle algorithm has become a standard and is being widely used, it is crucial to understand how its parameters affect QoS measures like energy usage and end-to-end delay. However, not much work has yet been done in this regard. Most of the papers that evaluate Trickle try to give guidelines on how to set the Trickle parameters using simulations \cite{sims, DIS, trick5, trick3}.

The goal of this paper is to develop and analyze an analytical model describing how Trickle disseminates updates throughout a network. Our results serve as a first step towards understanding how Trickle's parameters influence its performance; however, the main focus of this paper will be on the mathematical analysis. Additionally, our models are relevant for the analysis of protocols that build upon Trickle, such as Deluge \cite{trick15} and Melete \cite{trick12}.

\subsection{Key contributions of the paper}
As key contributions of this paper, we thoroughly analyze a Markov renewal process which models a Trickle propagation event in networks of nodes arranged on a line. We show how the hop count and end-to-end delay distributions depend on the Trickle parameters and network density, as the size of the network grows large. These insights help us to better understand the impact of Trickle's parameters on its performance and how to tune these parameters. Furthermore, we show that by a simple extension of the Trickle algorithm, the expected end-to-end delay can be significantly decreased. Finally, we show how to calculate the generating functions of the hop count and end-to-end delay distributions for any network size.

\subsection{Related work}

Some analytical results concerning the message count of the Trickle algorithm are provided in \cite{trick6, trick5, meyfroyt}. First, in \cite{trick5} qualitative results are provided on the scalability of the algorithm. Specifically, it is conjectured that in single-hop networks the number of transmissions per time interval is bounded regardless of the network size, if a listen-only period is used. When no listen-only period is used, the message count scales as $\mathcal{O}(\sqrt{n})$, where $n$ is the size of the network. These claims are proven in \cite{meyfroyt} and tight upper bounds and accompanying growth factors for the message count are provided. Additionally, distributions of times between consecutive transmissions for large single-hop networks are derived and the concept of a listen-only period is generalized. Moreover, approximations for the message count in multi-hop networks are provided. Lastly, in \cite{trick6}, a different, slightly more accurate, approximation is given for the message count in multi-hop networks with specific parameter settings. 

Analytical results on the speed at which the Trickle algorithm can propagate new data throughout a network are even fewer. In \cite{trick7} the authors provide a method for deriving the Laplace transform of the distribution function of the end-to-end delay for any network topology. However, the method is computationally involved, limiting its practical use, and does not provide much insight. In this paper we provide more easily computable expressions for the Laplace transforms of the hop count and end-to-end delay in a line network.

\subsection{Organization of the paper}
The remainder of this paper is organized as follows. In Section \ref{Trickle} we give a detailed description of the Trickle algorithm and propose a simple extension.   We then analyze how fast the algorithm can propagate updates across a network consisting of nodes placed on a line in Section \ref{model}. Additionally, we derive the limiting distributions for the hop count and end-to-end delay as the size of the network grows large. These results are compared with simulations and we show that with our modified algorithm the expected end-to-end delay can be significantly decreased.  This is followed in Section \ref{gen} by a derivation of the generating functions of the hop count and end-to-end delay for any network size. Finally, we present our conclusions in Section \ref{conclusion}.

\section{The Trickle Algorithm}\label{Trickle}
We now provide a detailed description of the original Trickle algorithm \cite{trick3} before introducing a  small extension, which helps improve the algorithm's performance. The Trickle algorithm has two main goals. First, whenever a new update enters the network, it must be propagated quickly. Secondly, when there are no updates, communication overhead has to be kept to a minimum. 

The Trickle algorithm achieves this by using a ``polite gossip" policy. Nodes divide time into intervals of varying length. During each interval a node will broadcast its current information, if it has not heard other nodes transmit the same information during that interval, in order to check if its information is up to date. If it has recently heard another node transmit the same information it currently has, it will stay quiet, assuming there is no new information to be received. Additionally, it will increase the length of its intervals, decreasing its broadcasting rate. Whenever a node receives an update or hears old information, it will reduce its interval size, increasing its broadcasting rate, in order to quickly resolve the inconsistency. This way inconsistencies are detected and resolved fast, while keeping the number of transmissions low.

\subsection{Algorithm Description}
The algorithm has three parameters:
 \begin{itemize}
 \item A threshold value $k$, called the redundancy constant.
 \item The maximum interval size $\tau_h$.
 \item The minimum interval size $\tau_l$.

 \end{itemize}
Furthermore, each node in the network has its own timer and keeps track of three variables:
  \begin{itemize}
 \item The current interval size $\tau$.
 \item A counter $c$, counting the number of messages heard during an interval.
 \item A broadcasting time $t$ during the current interval.
 \end{itemize}
The behavior of each node is described by the following set of rules:
\begin{enumerate}
 \item At the start of a new interval a node resets its timer and counter $c$ and sets $t$ to a value in $[\frac{1}{2}\tau,\tau]$ at random.
 \item When a node hears a message that is consistent with the information it has, it increments $c$ by 1.
 \item When a node's timer hits time $t$, the node broadcasts its message, if $c<k$.
 \item When a node's timer hits time $\tau$, it doubles its interval size $\tau$ up to $\tau_h$ and starts a new interval.
 \item When a node hears a message that is inconsistent with its own information, then if $\tau>\tau_l$ it sets $\tau$ to $\tau_l$ and starts a new interval, otherwise it does nothing.
\end{enumerate}

\subsection{Modification to the algorithm}
Consider rule 1 of the Trickle algorithm. It states that nodes should pick their broadcasting times uniformly in $[\frac{1}{2}\tau,\tau]$ at random, leaving $\tau/2$ time units before broadcasting as a listen-only period. The reason for having this listen-only period is discussed in \cite{trick3}. The authors of \cite{trick3} argue that when no listen-only period is used, i.e. nodes always pick $t$ in $[0,\tau]$, sometimes nodes will broadcast soon after the beginning of their interval, listening for only a short time, before anyone else has a chance to speak up. If we have a perfectly synchronized network this does not give a problem, because the first $k$ transmissions will simply suppress all the other broadcasts during that interval. However, in an unsynchronized network, if a node has a short listening period, it might broadcast just before another node starts its interval and that node possibly also has a short listening period. This possibly leads to a lot of redundant messages and is referred to as the short-listen problem.

It has been shown in \cite{meyfroyt}, that such a listen-only period is indeed necessary to  resolve the short-listen problem and to ensure scalability of the Trickle algorithm. However, the authors of 
\cite{meyfroyt} also observe that introducing such a listen-only period can greatly affect propagation speed. That is, when a listen-only period of $\tau/2$ is used, newly updated nodes will always have to wait for a period of at least $\tau_l/2$, before attempting to propagate the received update. Consequently, in an $m$-hop network, the end-to-end delay is at least $m\tau_l/2$. Hence, as is also argued in \cite{meyfroyt}, on the one hand long listen-only periods reduce the number of redundant transmissions, but on the other hand short listen-only periods increase propagation speed. 

For these reasons, based on ideas from \cite{meyfroyt}, we propose to add a listen-only parameter $\eta$ and to modify rule 1 of the Trickle algorithm:
\begin{framed}
\begin{enumerate}
 \item[$1^*$.] At the start of a new interval a node resets its timer and counter $c$ and, if $\tau=\tau_l$, sets $t$ to a value in $[\eta\tau,\tau]$ at random, otherwise in $[\frac{1}{2}\tau,\tau]$ at random.
 \end{enumerate}
\end{framed}
Here one should think of $\eta$ being smaller than $1/2$ (note that $\eta=1/2$ gives the original Trickle algorithm) and, as our analysis will show, preferably $\eta=0$. Rule $1^*$ then tries to achieve the best of both worlds. When nodes have just received an update and reset $\tau$ to $\tau_l$, they are allowed to be impatient and transmit after listening for only $\eta\tau$ time units. Probably, they are at the front of the propagation wave and have neighbors that are not yet up to date. When $\tau>\tau_l$, the wave front probably has passed, and nodes should first listen to what their neighbors have to say, before deciding whether to broadcast or not.

Furthermore, note that even for $\eta=0$ this modification does not suffer from the short-listen problem. When a group of nodes is updated by a broadcast, they will all start a new interval at the same time and therefore become synchronized. Therefore, the short-listen problem will not cause additional redundant transmissions within this set of nodes. Furthermore, since at the front of the propagation wave nodes are almost synchronized, having them pick $t$ in $[\eta\tau,\tau]$ as opposed to picking $t$ in $[\tau/2,\tau]$, gives the newly updated nodes a bigger contention window to schedule broadcasts, leading to fewer channel collisions. After the update has passed, nodes will go back to having a listen-only period of half an interval.

\section{Propagation model}\label{model}
In this section we develop and analyze a model describing how fast the modified Trickle algorithm can propagate updates in a network consisting of nodes placed on a line. We will first briefly discuss the assumptions of our model, their relevance and their limitations. This is followed by an analysis of the model and lastly we validate our results through simulations.

\subsection{Model assumptions}
As mentioned, the goal of this paper is to analytically gain insight in the performance of Trickle when used to disseminate data in wireless networks. Therefore, we will focus on networks consisting of nodes placed on a line. Understanding how Trickle disseminates data across a line should already give us useful insights in the performance of Trickle in general. Additionally, this is a common network topology in many applications, for example in intelligent street lighting. Moreover, simulation experiments in \cite{thesis} show that in regular topologies, such as grids, Trickle's performance is comparable to its performance in a line network. This is because in such networks Trickle tends to disseminate data in each direction at the same speed.

Secondly, we will concentrate on the setting $k=1$, which permits a detailed mathematical analysis. Moreover, this is a commonly used setting when Trickle is used for data dissemination, as in MPL \cite{trick9}. We do note that simulations suggest that for other regular network topologies, such as grids, performance for other $k$ is qualitatively the same as for $k=1$ and increasing $k$ has very little effect on the end-to-end delay \cite{sims, thesis, trickf}. In these scenarios, increasing $k$ is generally used in order to deal with lossy transmissions. However, recent work shows that in random topologies, when Trickle is used for routing, as in RPL, performance can strongly depend on the parameter $k$ and the exact network topology \cite{DIS, trickf}, but this is beyond the scope of this work. See \cite{DIS} for a detailed simulation study on the effect of the redundancy constant $k$ on the performance of RPL in random topologies.

Lastly, we assume all the nodes are perfect receivers and transmitters, i.e. transmissions are instantaneous and there is no packet loss. This allows us to focus on the performance of Trickle without explicit consideration of any MAC-layer protocols. Future work of the authors concerns analyzing the impact of the MAC-layer on the performance of Trickle when used as a data dissemination protocol.

\subsection{Model analysis}
We first introduce the model and some notation. Assume we have $n+1$ nodes arranged on a line and each node is separated by a distance of 1 from its neighbors. We label the nodes $0\text{, }1\text{, ..., }n$ from left to right. Without loss of generality we assume $\tau_l=1$. Updates will be injected into the network at node 0. Nodes have a fixed transmission range $R$, which means that when a node sends a message, only nodes within a distance $R$ of the broadcaster will receive the message. Finally, assume initially that all the nodes have $\tau=\tau_h$, and at time 0 an update is injected at node 0, which it starts to propagate. Let us denote the time node $n$ gets updated by $T^{(n)}$, which is called the end-to-end delay, and the number of transmissions needed to reach node $n$ by $H^{(n)}$, which is called the hop count.

Observe that because node 0 gets updated at time 0, this node will broadcast the update somewhere in the interval $[\eta,1]$, updating nodes $1$ to $R$. The newly updated nodes will reset their intervals, synchronize and set $\tau=\tau_l=1$. Node 0 will double its interval length after its interval ends and will have a listen-only period of length $\tau_l$ in its next interval. As a result, one of the newly updated nodes will be the next node to transmit. When node 1 is the first node to broadcast, it will only update node $R+1$, which will be the next broadcaster. When node $R$ is the next node to broadcast, it will update nodes $R+1$ to $2R$. After this step, nodes 1 to $R$ will double their interval length and the next transmission will again be done by one of the newly updated nodes. Let us formalize this process. 

Let $U_m$ be the number of nodes that are updated by the $m$'th broadcast and let $U_0=1$. Then we can write
\begin{equation}p_{ij}=\mathbb{P}[U_{m+1}=j\text{ }|\text{ }U_{m}=i]=\left\{\begin{array}{ll}\frac{1}{i}, & \hbox{$R-i<j\leq R$,}\\\\
   0 , & \hbox{otherwise.}\end{array} \right.
\end{equation}
Hence, $\{U_i\}_{i=0}^{\infty}$ forms a Markov chain with states $\{1\text{, ..., }R\}$ and transition matrix $P=[p_{ij}]$, which allows us to analyze the hop count of a propagation event. Calculating its steady-state probability vector $\boldsymbol{\pi}$ we find
\begin{equation}\label{pi}\boldsymbol{\pi}=\frac{2}{R(R+1)}(1\text{, ..., }R).\end{equation}
Moreover, the expected number of nodes that get updated by each hop in steady state is given by
\begin{equation}\label{muU}\mu_U=\mathbb{E}[U]=\lim_{i\rightarrow \infty}\mathbb{E}[U_i]=\sum_{j=1}^R j \pi_j=\frac{1}{3}(2R+1).\end{equation}
The time between consecutive hops depends on the number of newly updated nodes $U_i$. Let $T_i$ be the time of the $i$'th transmission and $\theta_i=T_{i}-T_{i-1}$ be the inter-transmission time between the $i-1$'th and $i$'th transmission and let $T_0=0$. We then know that the time $\theta_{i+1}$ is the minimum of $U_{i}$ Trickle timers.  Then, $\theta_{i+1} \sim \eta+(1-\eta)\beta[1,U_{i}]$, since the minimum of $m$ uniform random variables follows a $\beta(1,m)$ distribution. More precisely
\begin{equation}\label{nu}\mathbb{P}[\theta_{i+1}\leq t\text{ }|\text{ }U_{i}=u]= \left\{\begin{array}{ll}0, & \hbox{$t<\eta$,}\\\\
1-\left(\frac{1-t}{1-\eta}\right)^u , & \hbox{$\eta \leq t\leq 1$,}\\\\
1 , & \hbox{otherwise.}\end{array} \right.  
\end{equation}
Hence the expected time between transitions in steady state is given by
\begin{equation}\label{muT}\mu_\theta=\mathbb{E}[\theta]=\lim_{i\rightarrow \infty}\mathbb{E}[\theta_i]=\sum_{j=1}^R\pi_j\left(\eta+\frac{1-\eta}{j+1}\right)=\eta+2(1-\eta)\frac{R+1-\sum_{j=1}^{R+1}\frac{1}{j}}{R(R+1)}.
\end{equation}
The process $(U_i,T_i)_{i=0}^{\infty}$ is called a Markov renewal process, see \cite{renewal}. We will now analyze this Markov renewal process to gain insight in the propagation speed of the Trickle algorithm.

First note that the hop count $H^{(n)}$ can be written as
\begin{equation}\label{H}H^{(n)}=\min\left\{m\text{: }\sum_{i=1}^m U_i\geq  n\right\}.\end{equation}
Also note that $H^{(n)}$ is a stopping time with respect to the Markov chain $\{U_i\}_{i=0}^{\infty}$. Furthermore, we can write the end-to-end delay $T^{(n)}$ as follows:
\begin{equation}\label{T}
T^{(n)}=T_{H^{(n)}}=\sum_{i=1}^{H^{(n)}} \theta_i.
\end{equation}
We will analyze the behavior of the Markov renewal process for large $n$. For simplicity, we shall assume stationarity of the underlying Markov chain in some of the arguments and no longer assume $U_0=1$. However, note that the asymptotic results for large $n$ also hold for the case $U_0=1$, since we have a finite-state Markov chain, which converges geometrically fast to its steady-state distribution. Assuming stationarity of the Markov chain, we have by Wald's equation 
\[\mathbb{E}\left[H^{(n)}\right]=\mathbb{E}\left[\sum_{i=1}^{H^{(n)}} U_i\right]/\mu_U.\]
Furthermore, since $n\leq\mathbb{E}\left[\sum_{i=1}^{H^{(n)}} U_i\right]\leq n+R-1$, we find
\begin{prp}\label{prp1} For $k=1$, $\eta\in[0,1]$ and $R \in \mathbb{N}^+$,
\begin{equation}\lim_{n\rightarrow \infty} \frac{\mathbb{E}\left[H^{(n)}\right]}{n}=\frac{1}{\mu_U}=\frac{3}{2R+1} .\end{equation}
\end{prp}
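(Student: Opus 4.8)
The plan is to treat $H^{(n)}$ as the first-passage time of the partial-sum process $S_m:=\sum_{i=1}^m U_i$ above level $n$, show that $H^{(n)}/n\to 1/\mu_U$ almost surely, and then upgrade this to convergence of expectations via a trivial boundedness bound. First I would record the structural facts about the driving chain: $\{U_i\}$ lives on the finite set $\{1,\dots,R\}$, is irreducible (from the transition probabilities $p_{ij}$ every state reaches $R$ in one step, and from $R$ every state is reached) and aperiodic (state $R$ has a self-loop), so it has the unique stationary law $\boldsymbol{\pi}$ of \eqref{pi} and $\mu_U=\mathbb{E}[U]=\tfrac13(2R+1)\in(0,\infty)$ by \eqref{muU}. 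Consequently the strong law of large numbers for additive functionals of a finite irreducible chain applies for \emph{any} initial state, in particular for $U_0=1$, and yields $S_m/m\to\mu_U$ almost surely; this is exactly what lets us dispense with the separate treatment of the stationary chain and of the case $U_0=1$.

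Next I would deal with the overshoot. By \eqref{H} we have $S_{H^{(n)}}\ge n$ while $S_{H^{(n)}-1}<n$, and since the last increment satisfies $U_{H^{(n)}}\le R$, this gives the deterministic sandwich $n\le S_{H^{(n)}}\le n+R-1$, hence $S_{H^{(n)}}/n\to 1$ a.s. Because $U_i\ge 1$ we have $S_m\ge m\to\infty$, so $H^{(n)}\to\infty$ a.s.\ as $n\to\infty$, and composing the limit $S_m/m\to\mu_U$ along the random, divergent indices $m=H^{(n)}$ gives $S_{H^{(n)}}/H^{(n)}\to\mu_U$ a.s. Writing
\[
\frac{H^{(n)}}{n}=\frac{H^{(n)}}{S_{H^{(n)}}}\cdot\frac{S_{H^{(n)}}}{n}
\]
and letting $n\to\infty$ then produces $H^{(n)}/n\to 1/\mu_U$ almost surely. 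Finally, since $U_i\ge 1$ forces $1\le H^{(n)}\le n$, the variables $H^{(n)}/n$ all lie in $[0,1]$, so bounded convergence upgrades the a.s.\ limit to $\mathbb{E}[H^{(n)}]/n=\mathbb{E}[H^{(n)}/n]\to 1/\mu_U=3/(2R+1)$, which is the assertion.

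An alternative route, matching the Wald-equation phrasing used just above the statement, is to start from the stopped-sum identity $\mu_U\,\mathbb{E}[H^{(n)}]=\mathbb{E}[S_{H^{(n)}}]+O(1)$ and combine it with $n\le\mathbb{E}[S_{H^{(n)}}]\le n+R-1$. I expect the one point requiring genuine care to be the justification of this identity: the $U_i$ are Markov-modulated rather than i.i.d., so plain Wald does not apply verbatim, and the clean fix is to pass to the martingale $M_m=g(U_m)+\sum_{i=0}^{m-1}(U_i-\mu_U)$, where $g$ solves the Poisson equation $g-Pg=f-\mu_U$ with $f(u)=u$ (solvable and bounded on the finite state space); optional stopping at $H^{(n)}$, legitimate because $\mathbb{E}[H^{(n)}]\le n<\infty$ and $M$ has bounded increments, produces exactly the $O(1)$ boundary correction. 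Everything else in either route is bookkeeping with the deterministic bounds $1\le H^{(n)}\le n$ and $0\le S_{H^{(n)}}-n\le R-1$.
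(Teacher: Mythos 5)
Your argument is correct, but your primary route is genuinely different from the paper's. The paper assumes stationarity of $\{U_i\}$, invokes Wald's equation to write $\mathbb{E}[H^{(n)}]=\mathbb{E}\bigl[\sum_{i=1}^{H^{(n)}}U_i\bigr]/\mu_U$, sandwiches $n\le\mathbb{E}\bigl[\sum_{i=1}^{H^{(n)}}U_i\bigr]\le n+R-1$, and handles the actual initial condition $U_0=1$ only by a remark that the finite chain converges geometrically to stationarity. You instead prove the stronger almost-sure statement $H^{(n)}/n\to 1/\mu_U$ via the strong law for additive functionals of the finite irreducible aperiodic chain (valid for any initial state, so $U_0=1$ is covered directly) and upgrade to expectations by bounded convergence using $H^{(n)}\le n$; no version of Wald is needed. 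One small slip: $S_m:=\sum_{i=1}^m U_i\ge m$ gives $H^{(n)}\le n$, not $H^{(n)}\to\infty$; the divergence follows from $U_i\le R$, i.e.\ $S_m\le mR$, hence $H^{(n)}\ge n/R$. Your alternative route is essentially the paper's argument, and your caveat about it is well taken: the $U_i$ are Markov-modulated rather than i.i.d., so the identity the paper calls ``Wald's equation'' really does require either a Markov-renewal version of Wald or the Poisson-equation/optional-stopping argument you sketch, which yields the harmless $O(1)$ boundary terms. What the paper's route buys is brevity and consistency with the Markov renewal machinery reused in Proposition 2 and Theorems 1--3; what yours buys is a self-contained elementary proof that also delivers the almost-sure law of large numbers as a by-product.
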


Hence, since $1/R$ can be seen as a measure for the density of the network, we find that the hop count decreases linearly with the network density and is independent of the choice for $\eta$, as expected. 

Now using \eqref{T} and again applying Wald's equation, we conclude
\begin{prp}\label{prp2}For $k=1$, $\eta\in[0,1]$ and $R\in \mathbb{N}^+$,
\begin{equation}
\lim_{n\rightarrow \infty} \frac{\mathbb{E}\left[T^{(n)}\right]}{n}=\frac{\mu_\theta}{\mu_U}=\frac{3}{2R+1}\left(\eta+2(1-\eta)\frac{R+1-\sum_{j=1}^{R+1}\frac{1}{j}}{R(R+1)}\right).
\end{equation}
\end{prp}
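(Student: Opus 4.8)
The plan is to peel $T^{(n)}=\sum_{i=1}^{H^{(n)}}\theta_i$ (from \eqref{T}) apart in two stages: first strip off the within-hop randomness by conditioning on the embedded chain $\{U_i\}$, then relate the resulting sum to $H^{(n)}$ by a (Markov-renewal) Wald argument, after which Proposition~\ref{prp1} finishes the job. For the first stage, recall from \eqref{nu} that, conditionally on the whole sequence $\{U_i\}_{i\ge 0}$, the inter-transmission times $\theta_1,\theta_2,\dots$ are independent with $\mathbb{E}[\theta_i\mid\{U_j\}_j]=g(U_{i-1})$, where $g(u):=\eta+\frac{1-\eta}{u+1}$. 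Since $H^{(n)}$ is a deterministic function of $\{U_i\}$ by \eqref{H} and all quantities are nonnegative, conditioning on $\{U_i\}$ and applying Tonelli's theorem gives
\begin{equation*}
\mathbb{E}\bigl[T^{(n)}\bigr]=\mathbb{E}\Bigl[\sum_{i=1}^{H^{(n)}}g(U_{i-1})\Bigr].
\end{equation*}

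For the second stage, note that $H^{(n)}$ is a stopping time for $\{U_i\}$ with $\mathbb{E}[H^{(n)}]\le n<\infty$ (indeed $H^{(n)}\le n$, since every $U_i\ge 1$). The subtlety---and the main obstacle---is that the event $\{H^{(n)}\ge i\}$ depends on $U_{i-1}$, so the elementary form of Wald's equation does not literally apply to $\sum_{i=1}^{H^{(n)}}g(U_{i-1})$; one needs its Markov-renewal refinement. Concretely, since $\{U_i\}$ is a finite irreducible chain with stationary law $\boldsymbol{\pi}$ from \eqref{pi} and $\mathbb{E}_{\boldsymbol{\pi}}[g(U)]=\mu_\theta$ by \eqref{muT}, take the bounded solution $h$ of Poisson's equation $h-Ph=g-\mu_\theta$ on the state space $\{1,\dots,R\}$ (which exists because $\sum_j\pi_j(g(j)-\mu_\theta)=0$); then $M_m:=h(U_m)-h(U_0)+\sum_{i=1}^m\bigl(g(U_{i-1})-\mu_\theta\bigr)$ is a martingale with bounded increments, and optional stopping at $H^{(n)}$ gives
\begin{equation*}
\mathbb{E}\Bigl[\sum_{i=1}^{H^{(n)}}g(U_{i-1})\Bigr]=\mu_\theta\,\mathbb{E}\bigl[H^{(n)}\bigr]+\mathbb{E}\bigl[h(U_0)-h(U_{H^{(n)}})\bigr],
\end{equation*}
where the boundary term is bounded by $2\max_j|h(j)|$, hence $O(1)$ in $n$. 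Combining the two displays yields $\mathbb{E}[T^{(n)}]=\mu_\theta\,\mathbb{E}[H^{(n)}]+O(1)$; dividing by $n$, letting $n\to\infty$, and invoking Proposition~\ref{prp1} (which also renders the $O(1)$ term negligible, as $\mathbb{E}[H^{(n)}]\to\infty$) gives $\lim_n\mathbb{E}[T^{(n)}]/n=\mu_\theta/\mu_U$; substituting \eqref{muU} and \eqref{muT} then produces the stated expression.

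An alternative I would keep in reserve, avoiding Poisson's equation, is to first establish the almost-sure convergence $T^{(n)}/n\to\mu_\theta/\mu_U$ and then upgrade to convergence of means. The ergodic theorem applied to $S_m=\sum_{i=1}^m U_i$ gives $S_m/m\to\mu_U$, hence $H^{(n)}/n\to 1/\mu_U$ with $H^{(n)}\to\infty$ a.s.; the ergodic theorem applied to the stationary sequence $\theta_i$ gives $\frac1m\sum_{i=1}^m\theta_i\to\mu_\theta$ a.s.; composing the two yields $T^{(n)}/n\to\mu_\theta/\mu_U$ a.s. Convergence of expectations then follows by bounded convergence from the trivial bound $0\le T^{(n)}/n\le H^{(n)}/n\le 1$ (each $\theta_i\le\tau_l=1$, each $U_i\ge 1$). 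In either approach the one delicate point is the same: passing from $\theta_i$---which is not i.i.d., only Markov-modulated---to its per-hop mean over a stopping-time horizon; the rest is bookkeeping.
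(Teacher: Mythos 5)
Your proof is correct. It follows the same overall route as the paper --- reduce $\mathbb{E}[T^{(n)}]$ to $\mu_\theta\,\mathbb{E}[H^{(n)}]$ plus a negligible term and then invoke Proposition~\ref{prp1} --- but where the paper disposes of the whole argument with ``again applying Wald's equation,'' you actually prove the identity being invoked. You are right that this needs justification: $\theta_i$ is modulated by $U_{i-1}$ and the event $\{H^{(n)}\ge i\}$ also depends on $U_{i-1}$, so the elementary (i.i.d.) form of Wald's equation does not apply, and the paper is implicitly appealing to its Markov-renewal refinement. Your Poisson-equation martingale $M_m=h(U_m)-h(U_0)+\sum_{i=1}^m\bigl(g(U_{i-1})-\mu_\theta\bigr)$ with optional stopping at $H^{(n)}\le n$ is a clean, self-contained substitute, and it has the added benefit of working for an arbitrary initial distribution, so the paper's side remark that the stationarity assumption is harmless comes for free. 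The one step you assert without comment is $\mathbb{E}[\theta_i\mid\{U_j\}_j]=g(U_{i-1})$: this requires $\theta_i$ to be conditionally independent of $U_i$ (and of the rest of the chain) given $U_{i-1}$, which does hold here because the minimum and the argmin of i.i.d.\ uniform timers are independent, but it deserves a sentence. Your ergodic-theorem alternative is also sound; the domination $0\le T^{(n)}/n\le H^{(n)}/n\le 1$ is exactly what is needed to upgrade almost-sure convergence to convergence of means, and it is arguably the most elementary complete proof of the proposition.
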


Proposition \ref{prp2} reveals the impact of a listen-only period on the end-to-end delay. First of all, we find that the expected end-to-end delay is decreasing in $\eta$. Furthermore, if $\eta>0$, the end-to-end delay decreases linearly with the density of the network, due to each hop taking at least $\eta$ time units. If $\eta=0$ the end-to-end delay decreases quadratically with network density. Hence, in dense networks the algorithm can benefit greatly from setting $\eta=0$.

In addition to a law of large numbers for the hop count and end-to-end delay, we now provide results concerning their limiting distribution. First of all, we provide results on the asymptotics of the variance of $H^{(n)}$ and $T^{(n)}$.

\begin{thm}\label{thm1}
Let $H^{(n)}$ be as defined by (\ref{H}). Then
\begin{equation}\label{sigmaH}\lim_{n\rightarrow \infty}\frac{\textnormal{Var}\left[H^{(n)}\right]}{n}=\sigma_H^2=\frac{\gamma_U^2}{\mu_U^3}=\frac{R^2+R-2}{16R^3+24R^2+12R+2},\end{equation}
where
\begin{equation}\label{gammaU}\gamma_U^2=\lim_{m\rightarrow \infty}\frac{1}{m}\textnormal{Var}\left[\sum_{i=0}^m U_i\right]=\textnormal{Var}[U_0] + 2\sum_{j=1}^\infty \textnormal{Cov}[U_0,U_{j}].\end{equation}
\end{thm}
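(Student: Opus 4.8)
The plan is to view $H^{(n)}$ as the first-passage time of the additive functional $S_m:=\sum_{i=1}^{m}U_i$ above level $n$, to transfer a central limit theorem from $S_m$ to $H^{(n)}$ in the classical renewal-theoretic way, to upgrade the resulting weak convergence to convergence of second moments, and finally to evaluate $\gamma_U^2$ for the explicit chain $P$. Throughout I would work under stationarity of $\{U_i\}$; the initial condition $U_0=1$ is then handled by the geometric-ergodicity argument already invoked before Proposition~\ref{prp1} (the law of $U_m$ differs from $\boldsymbol\pi$ by an exponentially small amount, which affects neither the limit nor the $n$-scaling of the variance). The two facts about $S_m$ that I would establish first are: \emph{(i)} the additive-functional central limit theorem $(S_m-\mu_U m)/\sqrt m\Rightarrow\mathcal N(0,\gamma_U^2)$, with $\gamma_U^2$ exactly the series in \eqref{gammaU}, which converges absolutely because the finite, irreducible, aperiodic chain $P$ (aperiodic since $p_{RR}=1/R>0$) mixes geometrically; and \emph{(ii)} a uniform sub-Gaussian tail bound $\mathbb P[\,|S_m-\mu_U m|\ge a\,]\le C\exp(-ca^2/m)$ with $c,C>0$ independent of $m$. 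For \emph{(ii)} I would use the Poisson-equation martingale decomposition: writing $g(i):=i-\mu_U$ and solving $(I-P)h=g$, the sum $S_m-\mu_U m$ becomes a bounded telescoping term plus a sum of bounded martingale differences (bounded because $1\le U_i\le R$), to which the Azuma--Hoeffding inequality applies.

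Next I would carry out the inversion. Since $U_i\ge 1$, $S_m$ is strictly increasing, so by \eqref{H} the identity $\{H^{(n)}\le m\}=\{S_m\ge n\}$ holds exactly. Inserting $m=\lfloor n/\mu_U+x\,\sigma_H\sqrt n\rfloor$ with $\sigma_H^2=\gamma_U^2/\mu_U^{3}$ into \emph{(i)} gives $\mathbb P[H^{(n)}\le m]\to\Phi(x)$, that is, $(H^{(n)}-n/\mu_U)/(\sigma_H\sqrt n)\Rightarrow\mathcal N(0,1)$; the same identity turns \emph{(ii)} into $\mathbb P[\,|H^{(n)}-n/\mu_U|\ge t\sqrt n\,]\le C'e^{-c't^2}$ uniformly in $n$, so the family $\{(H^{(n)}-n/\mu_U)^2/n\}_{n\ge1}$ is uniformly integrable. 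Weak convergence together with uniform integrability then gives $\mathbb E[(H^{(n)}-n/\mu_U)^2]/n\to\sigma_H^2$. Using the elementary identity $\mathrm{Var}[H^{(n)}]=\mathbb E[(H^{(n)}-n/\mu_U)^2]-(\mathbb E[H^{(n)}]-n/\mu_U)^2$ together with $n\le\mathbb E[\sum_{i=1}^{H^{(n)}}U_i]\le n+R-1$ and Wald's equation (both already used in the text), which give $\mathbb E[H^{(n)}]-n/\mu_U=O(1)$, I would conclude $\mathrm{Var}[H^{(n)}]/n\to\sigma_H^2$.

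It remains to evaluate the constant. From \eqref{pi}, $\mathbb E_{\boldsymbol\pi}[U_0^2]=\tfrac{2}{R(R+1)}\sum_{j=1}^{R}j^{3}=\tfrac{R(R+1)}{2}$, so by \eqref{muU}, $\mathrm{Var}_{\boldsymbol\pi}[U_0]=\tfrac{R(R+1)}{2}-\tfrac{(2R+1)^2}{9}=\tfrac{(R-1)(R+2)}{18}$. The key simplification is that $g(i)=i-\mu_U$ is a right eigenvector of $P$: since $\sum_{j=R-i+1}^{R}j=\tfrac12 i(2R+1-i)$, one has $(Pg)(i)=\tfrac{2R+1}{2}-\tfrac{i}{2}-\mu_U=-\tfrac12(i-\mu_U)$, i.e.\ $Pg=-\tfrac12 g$ (consistent with $\boldsymbol\pi g=0$). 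Hence $\mathrm{Cov}_{\boldsymbol\pi}[U_0,U_j]=\langle g,P^{j}g\rangle_{\boldsymbol\pi}=(-\tfrac12)^{j}\mathrm{Var}_{\boldsymbol\pi}[U_0]$, the series in \eqref{gammaU} is geometric, and
\[\gamma_U^2=\mathrm{Var}_{\boldsymbol\pi}[U_0]\Bigl(1+2\sum_{j\ge1}(-\tfrac12)^{j}\Bigr)=\tfrac13\,\mathrm{Var}_{\boldsymbol\pi}[U_0]=\frac{(R-1)(R+2)}{54},\]
so that, by \eqref{muU}, $\sigma_H^2=\gamma_U^2/\mu_U^{3}=\dfrac{(R-1)(R+2)/54}{(2R+1)^{3}/27}=\dfrac{R^2+R-2}{16R^3+24R^2+12R+2}$, as claimed; the case $R=1$, where $H^{(n)}=n$ deterministically, is the trivial one with $\gamma_U^2=0$. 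I expect the two load-bearing steps to be the uniform-integrability upgrade in the second paragraph — which is precisely where the sub-Gaussian control of $S_m$, and behind it the boundedness $U_i\le R$, is needed to pass from a distributional limit to convergence of variances — and the eigenvector identity $Pg=-\tfrac12 g$ that collapses the autocovariance series; the renewal inversion and the law-of-large-numbers bookkeeping are routine.
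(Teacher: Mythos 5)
Your proposal is correct, and it reaches the result by a genuinely different route for the analytic core. The paper disposes of the hard step in one line by citing a first-passage-time variance theorem for Markov renewal processes (Theorem 3.4 of its reference \cite{renewal2}), which directly yields $\textnormal{Var}[H^{(n)}]\sim n\gamma_U^2/\mu_U^3$; all of the remaining work in the paper's appendix is the evaluation of $\gamma_U^2$. You instead prove that asymptotic from scratch: the additive-functional CLT for $S_m$, the Poisson-equation/Azuma sub-Gaussian bound, the exact inversion $\{H^{(n)}\le m\}=\{S_m\ge n\}$, and the uniform-integrability upgrade from weak convergence to convergence of variances (correctly noting that $U_i\ge 1$ forces $|H^{(n)}-n/\mu_U|\le n$, so the tail bound is only needed for $t\le\sqrt n$, where $m=O(n)$). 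This buys a self-contained argument at the cost of length; the paper's citation buys brevity at the cost of opacity. For the constant, the two arguments are essentially the same computation in different clothing: the paper's inductive step $\mathbb{E}[U_0U_j]=\tfrac{1}{6}(2R+1)^2-\tfrac{1}{2}\mathbb{E}[U_0U_{j-1}]$, obtained from $\sum_{l=R-m+1}^{R}l=\tfrac12 m(2R+1-m)$, is precisely your eigenvector identity $Pg=-\tfrac12 g$ for $g(i)=i-\mu_U$; your packaging is cleaner (it also explains, via $h=(I-P)^{-1}g=\tfrac23 g$, why the Poisson equation is explicitly solvable here) and yields the identical geometric series and the identical value $\gamma_U^2=\tfrac{1}{54}(R^2+R-2)$, hence $\sigma_H^2=(R^2+R-2)/(2(2R+1)^3)$ as claimed. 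Both you and the paper handle the non-stationary start $U_0=1$ by the same geometric-ergodicity remark, so the two proofs are at the same level of rigor on that point.
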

\begin{proof}
See \ref{appA}.
\end{proof}

Similarly, we have the following asymptotic result for the variance of $T^{(n)}$.
\begin{thm}\label{thm2}
Let $T^{(n)}$ be as defined by (\ref{T}). Then
\begin{equation}\label{sigmaT}\lim_{n\rightarrow \infty}\frac{\textnormal{Var}\left[T^{(n)}\right]}{n}=\sigma_T^2=\frac{\gamma_T^2}{\mu_U^3}=\frac{\mu_\theta^2\gamma_U^2+\mu_U^2\gamma_\theta^2-2\mu_U\mu_\theta\Delta}{\mu_U^3},\end{equation}
where $\gamma_U$ is as defined in \eqref{gammaU},
\begin{equation}\label{gammaT}\gamma_T^2=\lim_{m\rightarrow \infty}\frac{1}{m}\textnormal{Var}\left[\left(\sum_{i=0}^m U_i\right)\mu_\theta-T_{m+1}\mu_U\right],\end{equation} 
\[\Delta=(1-\eta)\frac{(4R+8)\left(\sum_{j=1}^{R+1}\frac{1}{j}\right)-(R^2+9R+8)}{9R^2+9R},\]
and 
\[\gamma_\theta^2=\textnormal{Var}[\theta_1]+2\boldsymbol{\pi}MZM\boldsymbol{1}-2\mu_\theta^2,\]
with $Z=(I-P+\boldsymbol{1}\boldsymbol{\pi})^{-1}$ the fundamental matrix and $M=\left[p_{ij}\left(\eta+\frac{1-\eta}{i+1}\right)\right]$.
\end{thm}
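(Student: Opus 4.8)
The plan is to follow the template of Theorem~\ref{thm1} (the Markov--renewal/martingale argument in \ref{appA}) and add a bilinear decomposition of the long-run variance together with an explicit fundamental-matrix computation. First I would introduce the increments $W_i:=\mu_\theta U_i-\mu_U\theta_{i+1}$, so that $V_m:=\sum_{i=0}^{m}W_i=\mu_\theta\sum_{i=0}^{m}U_i-\mu_U T_{m+1}$ is exactly the process whose scaled variance defines $\gamma_T^2$ in \eqref{gammaT}, and note that $\mathbb{E}_{\boldsymbol{\pi}}[W_i]=0$ by \eqref{muU} and \eqref{muT}. Writing $H:=H^{(n)}$, equation \eqref{H} gives $\sum_{i=1}^{H}U_i=n+\xi_n$ with $0\le\xi_n\le R-1$, and since $T^{(n)}=T_H$, a one-line rearrangement yields
\[
\mu_U T^{(n)}=\mu_\theta n-V_{H-1}+\mu_\theta\bigl(U_0+\xi_n-U_H\bigr).
\]
Because $U_0,U_H\in\{1,\dots,R\}$ and $0\le\xi_n\le R-1$, the last term has bounded variance, so Cauchy--Schwarz gives $\textnormal{Var}[T^{(n)}]=\mu_U^{-2}\,\textnormal{Var}[V_{H-1}]+o(n)$, and it remains to evaluate $\textnormal{Var}[V_{H-1}]$.

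Next I would split each increment as $W_i=\bar W(U_i)+\bigl(W_i-\bar W(U_i)\bigr)$, where $\bar W(u)=\mu_\theta u-\mu_U\bigl(\eta+\tfrac{1-\eta}{u+1}\bigr)=\mathbb{E}[W_i\mid U_i=u]$ is $\boldsymbol{\pi}$-centered and $W_i-\bar W(U_i)=-\mu_U\bigl(\theta_{i+1}-\mathbb{E}[\theta_{i+1}\mid U_i]\bigr)$ is a martingale-difference term, conditionally independent of the past given $U_i$ by \eqref{nu}. Solving Poisson's equation $(I-P)f=\bar W$ turns $\bar W(U_i)$ into a martingale increment plus a telescoping, hence bounded, term, so that $V_{H-1}$ equals a square-integrable martingale evaluated at the stopping time $H$ plus an $O(1)$ remainder. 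Since $U_i\ge1$ forces $H^{(n)}\le n$ deterministically, all integrability requirements are automatic, and the Wald/optional-stopping second-moment identity gives $\textnormal{Var}[V_{H-1}]=\gamma_T^2\,\mathbb{E}[H]+o(n)$, where $\gamma_T^2$ is the per-step conditional variance of this martingale, which one checks coincides with the limit in \eqref{gammaT}. Combined with Proposition~\ref{prp1} ($\mathbb{E}[H]\sim n/\mu_U$), this yields $\textnormal{Var}[T^{(n)}]/n\to\gamma_T^2/\mu_U^3=\sigma_T^2$.

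It then remains to extract the closed form. Bilinearity of the long-run covariance applied to $W_i=\mu_\theta U_i-\mu_U\theta_{i+1}$ gives $\gamma_T^2=\mu_\theta^2\gamma_U^2+\mu_U^2\gamma_\theta^2-2\mu_U\mu_\theta\Delta$, where $\gamma_U^2$ is as in Theorem~\ref{thm1}, $\gamma_\theta^2$ is the long-run variance of $\{\theta_i\}$ and $\Delta$ is the long-run covariance of $\{U_i\}$ and $\{\theta_{i+1}\}$. Splitting $\theta_{i+1}$ once more into $\mathbb{E}[\theta_{i+1}\mid U_i]$ plus independent noise and using the fundamental-matrix identity $\sum_{j\ge1}(P^j-\boldsymbol{1}\boldsymbol{\pi})=Z-I$, the noise contributes the $\textnormal{Var}[\theta_1]$ term while the conditional-mean part produces a quadratic form in $Z$; writing the vector of conditional means as $M\boldsymbol{1}$ (so that $\boldsymbol{\pi}M\boldsymbol{1}=\mu_\theta$) gives $\gamma_\theta^2=\textnormal{Var}[\theta_1]+2\boldsymbol{\pi}MZM\boldsymbol{1}-2\mu_\theta^2$, and $\Delta$ comes out as an analogous $Z$-quadratic form pairing the state functional $j\mapsto j$ with $M\boldsymbol{1}$. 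Substituting the explicit transition matrix $P$, the steady-state vector \eqref{pi}, inverting $I-P+\boldsymbol{1}\boldsymbol{\pi}$ and simplifying the resulting harmonic sums $\sum_{j=1}^{R+1}\tfrac1j$ then produces the stated expressions for $\Delta$, $\gamma_\theta^2$ and hence $\sigma_T^2$.

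I expect the conceptually delicate point to be the reduction of $\textnormal{Var}[V_{H-1}]$ to $\gamma_T^2\,\mathbb{E}[H]$: unlike in Theorem~\ref{thm1}, the summands $W_i$ carry extra randomness ($\theta_{i+1}$) beyond the chain $\{U_i\}$ on which $H^{(n)}$ is a stopping time, so one must set up the enlarged filtration carefully, track the state-dependent conditional variances and the boundary/overshoot terms, and verify that the index shift between $\{U_i\}$ and $\{\theta_{i+1}\}$ is immaterial in the limit. The other obstacle is purely computational -- inverting the fundamental matrix of this particular $P$ and carrying out the algebraic simplification that collapses everything to the closed forms in the statement -- and is routine but lengthy.
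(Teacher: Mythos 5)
Your outline is sound and would, if executed, prove the theorem, but it takes a genuinely different and considerably more laborious route than the paper. The paper disposes of the hard probabilistic step in one stroke by invoking a known result for stopped functionals of Markov renewal processes (\cite{renewal3}, Theorem 2), which directly yields $\textnormal{Var}[T^{(n)}]\sim n\gamma_T^2/\mu_U^3$; you instead rebuild that result from scratch via the overshoot rearrangement, a Poisson-equation martingale approximation on the enlarged filtration, and the second-moment optional-stopping identity. That is more self-contained and correctly flags the delicate points (the index shift between $U_i$ and $\theta_{i+1}$, the boundary terms), but it reproves a citable theorem. The bilinear split $\gamma_T^2=\mu_\theta^2\gamma_U^2+\mu_U^2\gamma_\theta^2-2\mu_U\mu_\theta\Delta$ and the fundamental-matrix form of $\gamma_\theta^2$ match the paper (which cites \cite{renewal4} for the latter rather than deriving it). The real divergence is in $\Delta$: the paper never inverts $I-P+\boldsymbol{1}\boldsymbol{\pi}$ for this chain. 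It observes that $P$ maps the identity to an affine function, so the same induction as in Theorem~\ref{thm1} gives $\textnormal{Cov}[\theta_1,U_j]=(-\tfrac12)^j\textnormal{Cov}[\theta_1,U_0]$ and the infinite sum collapses to a geometric series; and it uses reversibility ($\pi_ip_{ij}=\pi_jp_{ji}$, hence $(U_0,\theta_{j+1})\sim(\theta_1,U_j)$) to identify the two covariance sums in $\Delta$, which are a priori different objects. Your plan of expressing $\Delta$ as a $Z$-quadratic form and explicitly inverting the fundamental matrix is valid in principle but substantially harder to push to the stated closed form, and it silently treats the two sums $\sum_j\textnormal{Cov}[U_0,\theta_{j+1}]$ and $\sum_j\textnormal{Cov}[\theta_1,U_j]$ as one object; they coincide here only because of reversibility, which you should either invoke or avoid needing by computing both quadratic forms. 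If you want the explicit formula for $\Delta$ with reasonable effort, the geometric-decay/reversibility route is the one to take.
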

\begin{proof}
See \ref{appB}.
\end{proof}

Let us consider Equation \eqref{sigmaT} in more detail and investigate how the variance of the end-to-end delay depends on $\eta$ and $R$.  For this we plot $\sigma_T^2$ as a function of $\eta$ for $R=5$, $R=10$ and $R=30$ in Figure \ref{fig:2}. We find that for $R=5$ the variance is minimized for  $\eta\approx0.56$. For $R=10$, the minimum is achieved at $\eta\approx0.26$ and for $R=30$ at $\eta=0$. 
\begin{figure}[!h]
\minipage{0.32\textwidth}
  \includegraphics[width=\linewidth]{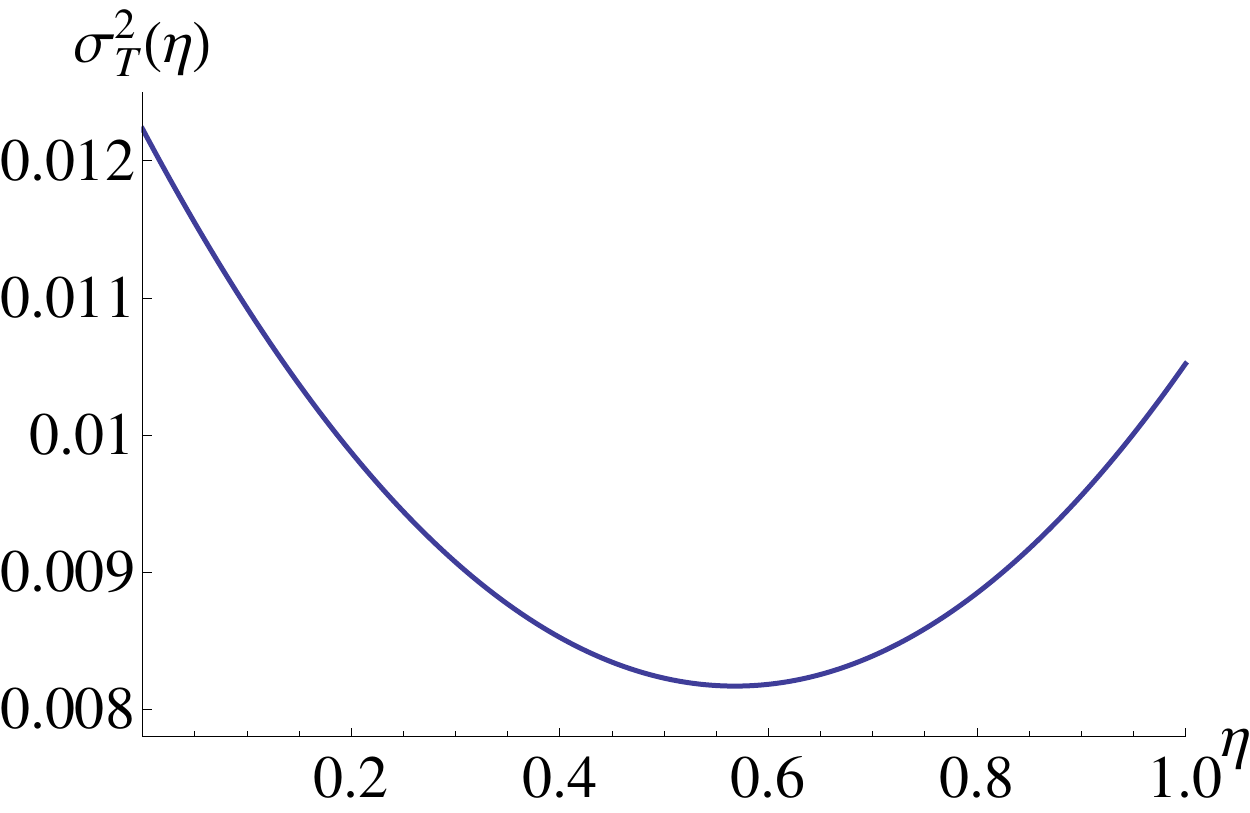}
  \caption*{\begin{small}$
R=5$
  \end{small}}
\endminipage\hfill
\minipage{0.32\textwidth}
  \includegraphics[width=\linewidth]{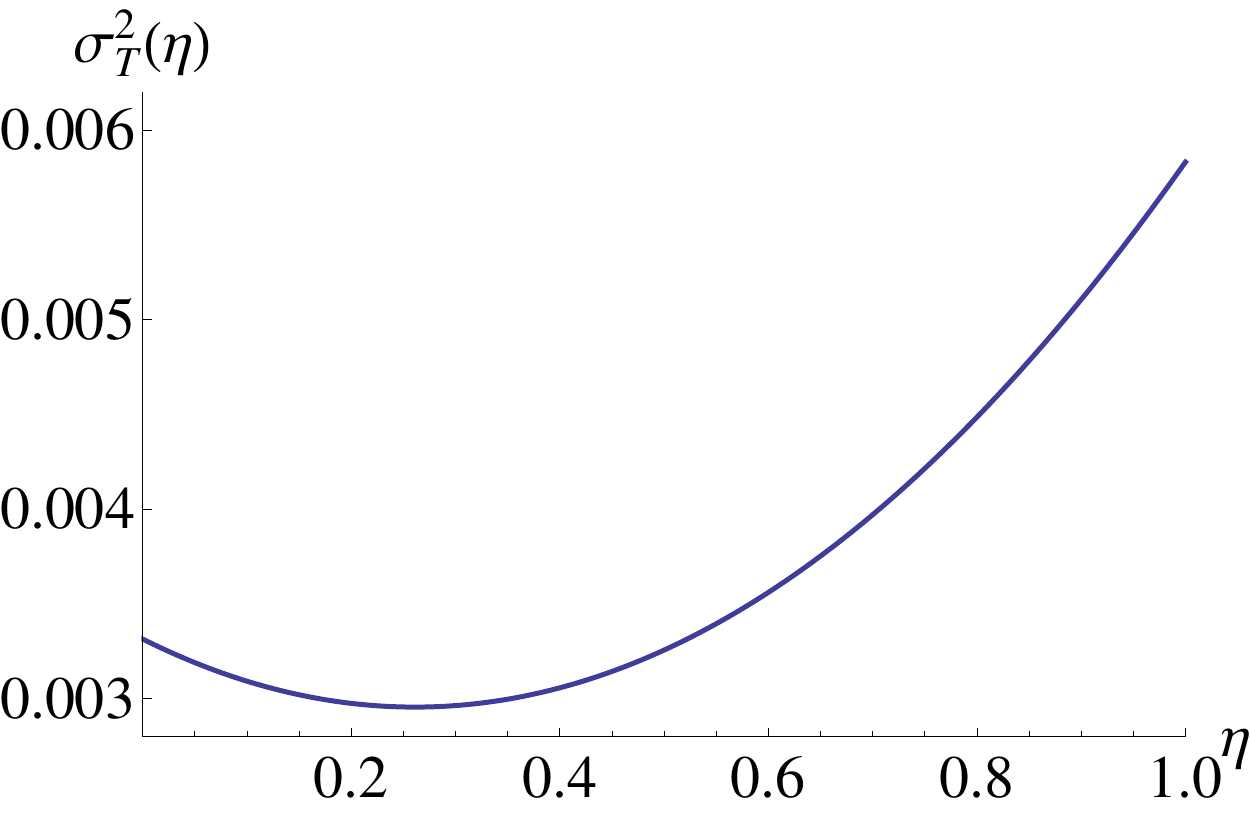}
  \caption*{\begin{small}$R=10$
  \end{small}}
\endminipage\hfill
\minipage{0.32\textwidth}%
  \includegraphics[width=\linewidth]{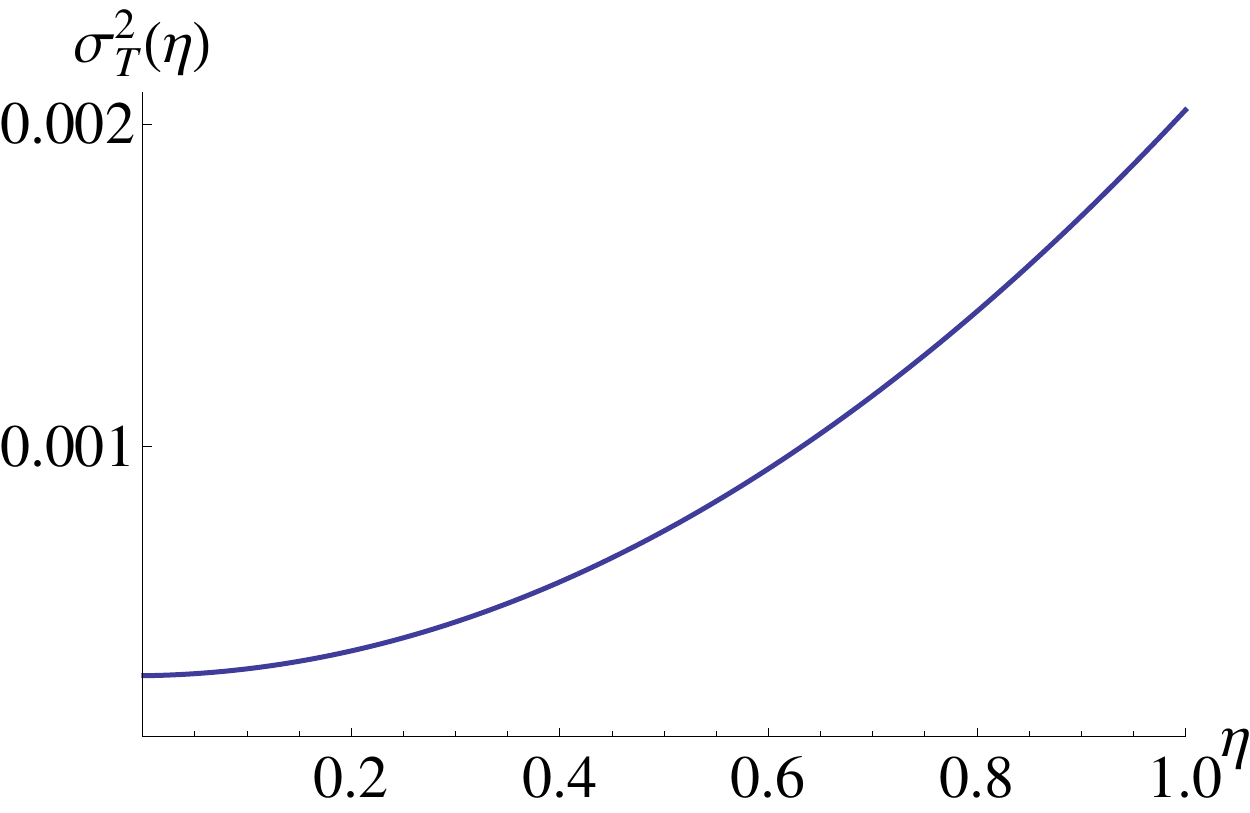}
  \caption*{\begin{small}$
R=30$
  \end{small}}
\endminipage
\caption{$\sigma_T^2$ as a function of $\eta$ for different $R$.}\label{fig:2}
\end{figure}

This can be explained as follows. For small $R$, the variance of the hop count is small compared to the variance of inter-transmission times. Consider for example the extreme case $R=1$. In this case $H^{(n)}=n$ and $\sigma_H^{(n)}=0$, while the time between hops is of some length in $[\eta\tau, \tau]$ uniformly at random. Therefore, if one wants to minimize the variance of the end-to-end delay it pays to increase $\eta$ in order to decrease the variance of inter-transmission times (at the cost of large end-to-end delays).

However, for large $R$, the hop count has high variability, while inter-transmission times have small variance. To see this, consider the other extreme case, where $R\rightarrow \infty$. The number of nodes that then get updated each hop becomes highly variable, while an inter-transmission time will always take very close to $\eta$ time units. Hence, in this case, it pays to decrease $\eta$ in order to decrease the impact of the hop count on the end-to-end delay variance. 

Therefore, for dense networks ($R$ large), setting $\eta=0$ is always preferred to $\eta>0$, since it minimizes both the expected value and variance of the propagation delay. For sparse networks ($R$ small), one can reduce the variance by setting $\eta>0$, however this greatly increases the expected delay, hence for most applications $\eta=0$ is probably more desirable.
\newline\\
Finally using results from \cite{renewal3}, we get the following results for the limiting distributions of $H^{(n)}$ and $T^{(n)}$ (see \cite{renewal3}, Theorem 1).
\begin{thm}\label{thm3}
Let $H^{(n)}$ and $T^{(n)}$ be as defined in (\ref{H}) and (\ref{T}) respectively. Then
\begin{equation}\label{hlim}\frac{H^{(n)}-\frac{1}{\mu_U} n}{\sigma_H n^{1/2}}\overset{\text{d}}{\longrightarrow}\mathcal{N}[0,1]\end{equation}
and
\begin{equation}\label{tlim}\frac{T^{(n)}-\frac{\mu_\theta}{\mu_U} n}{\sigma_T n^{1/2}}\overset{\text{d}}{\longrightarrow}\mathcal{N}[0,1],\end{equation}
where $\sigma_H$ is as defined in (\ref{sigmaH}) and $\sigma_T$ as defined in (\ref{sigmaT}).
\end{thm}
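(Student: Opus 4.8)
Since the limiting variances $\sigma_H^2$ and $\sigma_T^2$ have already been identified in Theorems~\ref{thm1} and~\ref{thm2}, the only genuinely new content of Theorem~\ref{thm3} is the Gaussianity of the limits, and the natural route is to invoke a central limit theorem for Markov renewal processes evaluated at a first-passage time, after checking that all of its hypotheses hold here. First I would record the structural facts about the driving data: the chain $\{U_i\}$ lives on the finite set $\{1,\dots,R\}$, is irreducible and aperiodic (from the transition law preceding \eqref{pi}), with stationary vector $\boldsymbol{\pi}$ in \eqref{pi}; and, conditionally on $U_i$, the increment $\theta_{i+1}$ is independent with the bounded distribution \eqref{nu} supported on $[\eta,1]$. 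Hence $(U_i,T_i)$ is a Markov renewal process whose increments $U_i$ and $\theta_i$ are uniformly bounded, so that every moment needed by the Markov renewal CLT is finite and the bivariate additive functional $(S_m,T_m)$, with $S_m:=\sum_{i=1}^m U_i$, has a well-defined asymptotic covariance matrix with entries $\gamma_U^2$, $\gamma_\theta^2$ and a cross term — this is precisely the setting of \cite{renewal3}, Theorem~1.

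For the hop count, I would use that $H^{(n)}=\min\{m:S_m\ge n\}$ in \eqref{H} is the first-passage time of $S_m$ above level $n$, with overshoot $\xi_n:=S_{H^{(n)}}-n$ bounded, $0\le\xi_n\le R-1$, because $U_i\le R$. Combining the strong law $S_m/m\to\mu_U>0$ with the CLT $(S_m-m\mu_U)/(\gamma_U\sqrt m)\Rightarrow\mathcal{N}[0,1]$ and the standard renewal-theoretic inversion (an Anscombe-type argument), one obtains \eqref{hlim} with $\sigma_H^2=\gamma_U^2/\mu_U^3$, which coincides with \eqref{sigmaH} after substituting \eqref{pi}–\eqref{muU}. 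For the delay I would introduce the zero-drift additive functional $W_m:=\mu_\theta S_m-\mu_U T_m$ along the chain, whose normalized variance is $\gamma_T^2$ as in \eqref{gammaT}. Evaluating at $m=H^{(n)}$ and using $S_{H^{(n)}}=n+\xi_n$ gives
\[\mu_U\Bigl(T^{(n)}-\tfrac{\mu_\theta}{\mu_U}n\Bigr)=\mu_\theta\xi_n-W_{H^{(n)}},\]
and since $H^{(n)}/n\to 1/\mu_U$, the CLT for $W_m$ together with the same Anscombe step yields $W_{H^{(n)}}/(\gamma_T\sqrt{n/\mu_U})\Rightarrow\mathcal{N}[0,1]$; the bounded term $\mu_\theta\xi_n=o(\sqrt n)$ is negligible, and dividing through by $\mu_U\sqrt n$ produces \eqref{tlim} with $\sigma_T^2=\gamma_T^2/\mu_U^3$, matching \eqref{sigmaT}. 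Joint convergence of $(H^{(n)},T^{(n)})$, if desired, follows the same way from the bivariate CLT.

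The hard part is the time-change step: one must argue that replacing the deterministic index $\lfloor n/\mu_U\rfloor$ by the random first-passage index $H^{(n)}$ does not perturb the Gaussian limit. For $S_m$ this is the classical renewal CLT; for the correlated functional $W_m$ it requires a Markov-chain Anscombe theorem, valid here because the fluctuations of $H^{(n)}$ about $n/\mu_U$ are of order $\sqrt n$ and the increments of $W_m$ are stationary, ergodic and uniformly integrable. Packaging exactly this is the role of \cite{renewal3}, Theorem~1, so once its hypotheses have been verified the statement is immediate; the only remaining work is the purely algebraic reduction of $\gamma_U^2/\mu_U^3$ and $\gamma_T^2/\mu_U^3$ to the closed forms in \eqref{sigmaH}–\eqref{sigmaT}, which is the computation already carried out for Theorems~\ref{thm1}–\ref{thm2}.
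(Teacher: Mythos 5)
Your proposal is correct and follows essentially the same route as the paper, which proves Theorem~\ref{thm3} simply by invoking Theorem~1 of \cite{renewal3} together with the variance constants already computed in Theorems~\ref{thm1} and~\ref{thm2}. You additionally spell out the hypothesis verification (finite irreducible chain, bounded increments) and the Anscombe-type time-change argument that the cited theorem packages, which the paper leaves implicit.
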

In the next section we will compare the results from Theorem \ref{thm3} with simulation results for networks with finite $n$, which will give us a better idea of the implications of Equations \eqref{hlim} and \eqref{tlim}.

\subsection{Simulation results}
We now look at some simulation results on the hop count and end-to-end delay and compare them with the asymptotic results from Theorem \ref{thm3}. We consider a sparse network with $R=5$ and $n=250$, and a dense network with $R=30$ and $n=1500$ and in both cases vary $\eta$. Note, that $n/R=30$ for both scenarios, which allows us to make a fair comparison between the two scenarios. For each scenario we run $10^5$ simulations.

\begin{figure}[!h]
\minipage{0.32\textwidth}
  \includegraphics[width=\linewidth]{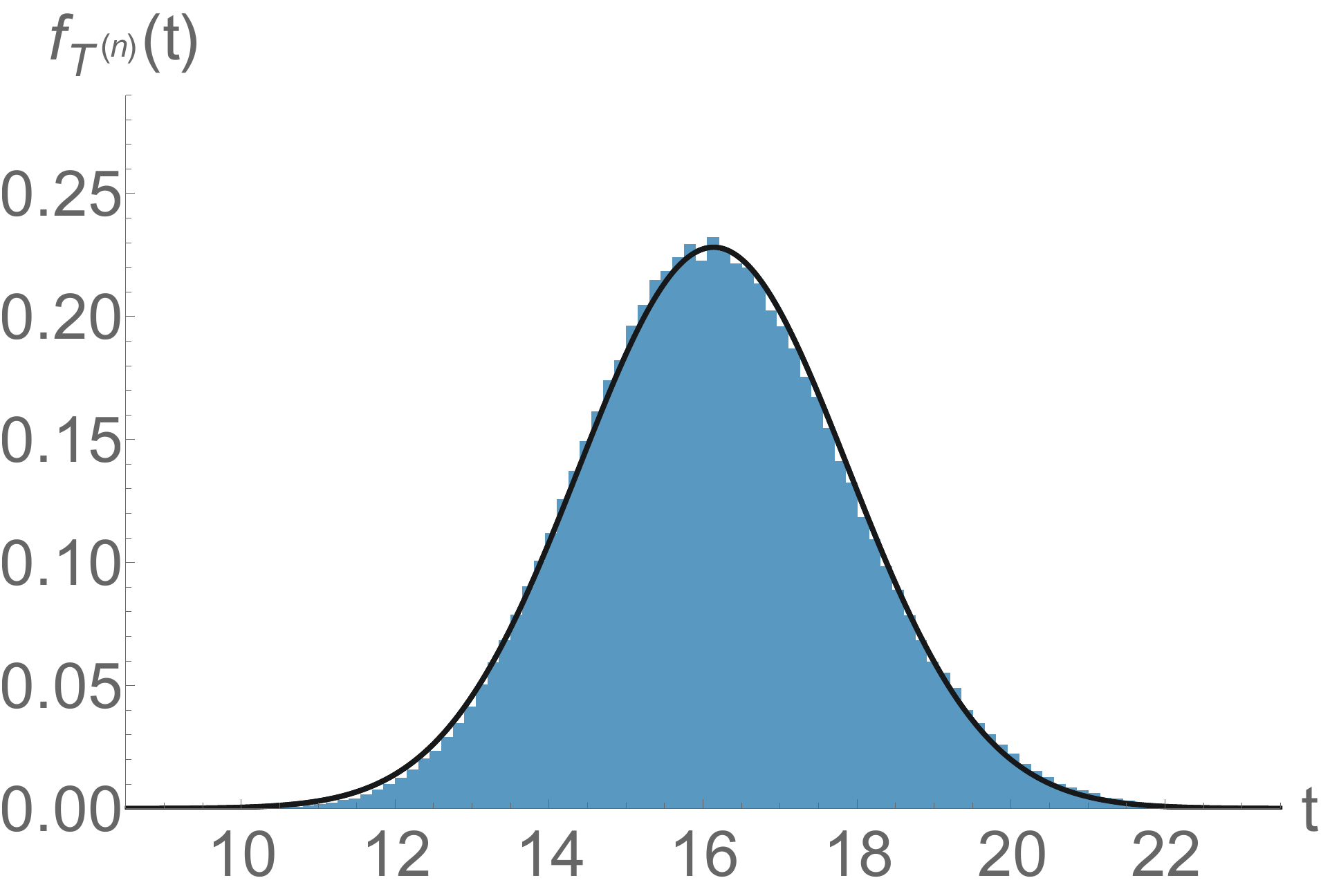}
  \caption*{\begin{small}$
  \eta=0$
  \end{small}}
\endminipage\hfill
\minipage{0.32\textwidth}
  \includegraphics[width=\linewidth]{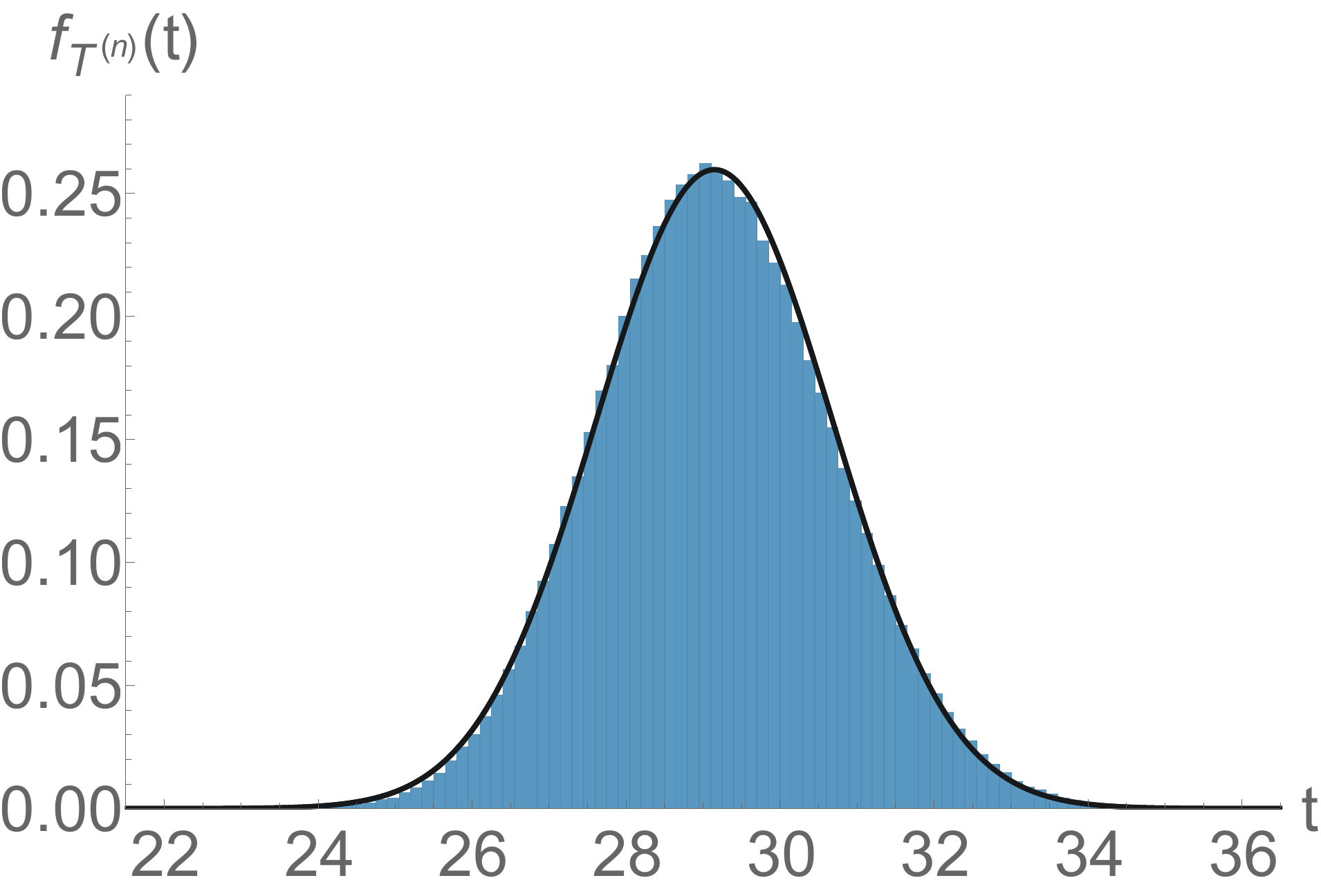}
  \caption*{\begin{small}$
  \eta=\frac{1}{4}$
  \end{small}}
\endminipage\hfill
\minipage{0.32\textwidth}%
  \includegraphics[width=\linewidth]{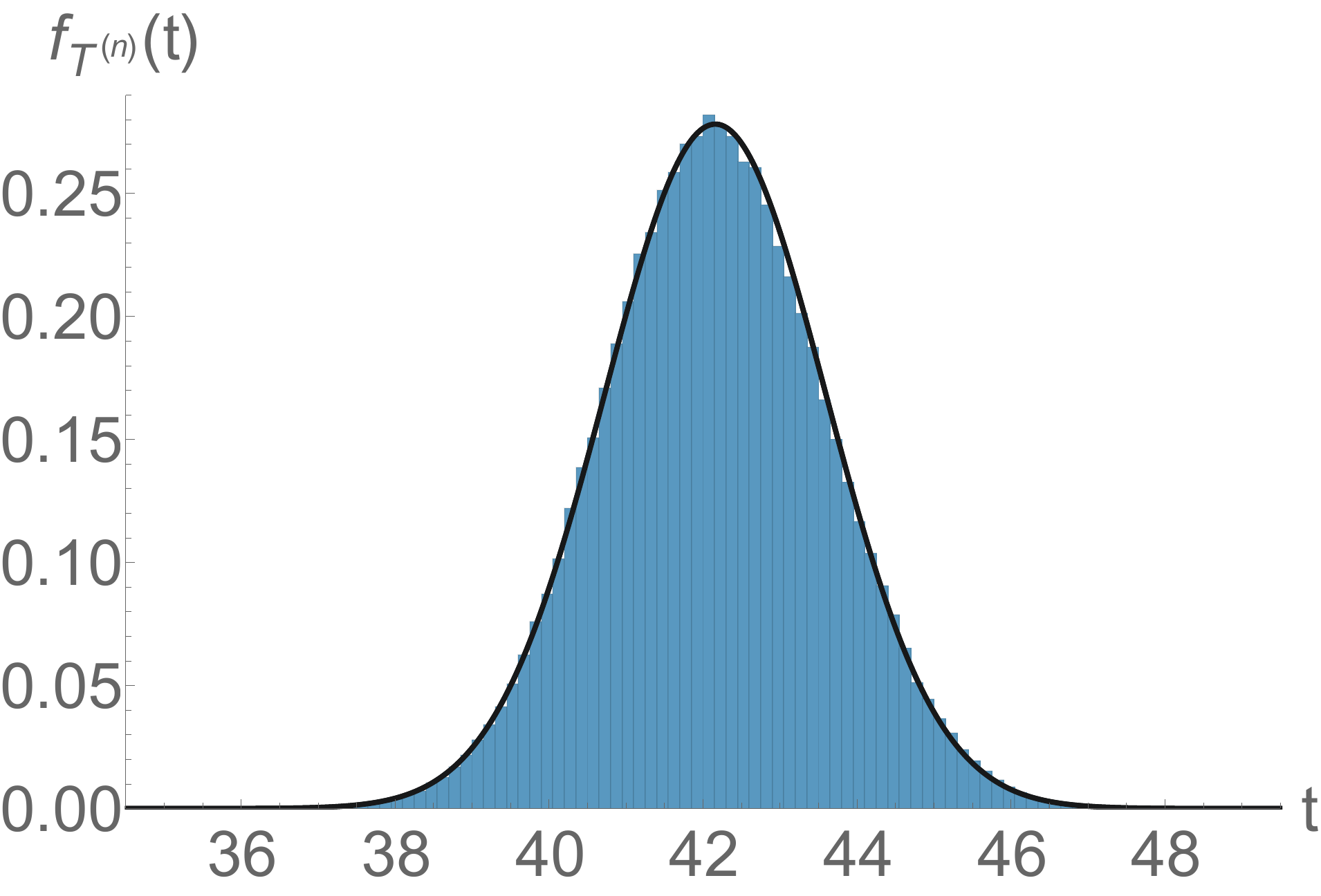}
  \caption*{\begin{small}$
  \eta=\frac{1}{2}$
  \end{small}}
\endminipage
\caption{Density of $T^{(n)}$ for $n=250$ and $R=5$ for different $\eta$: analysis vs. simulation.}\label{fig:T1}
\end{figure}

In Figure \ref{fig:T1} we compare the obtained histogram for the end-to-end delay with the asymptotic result of \eqref{tlim} for the case $R=5$ and $n=250$. We find that the end-to-end delay distribution is approximated well by the normal distribution. Furthermore, setting $\eta=0$ as opposed to $\eta=\frac{1}{2}$ more than halves the expected delay. As predicted by Figure \ref{fig:2} the variance indeed increases as $\eta$ approaches 0. However, the variance does not seem to change significantly. Hence, setting $\eta=0$ seems preferable to $\eta=\frac{1}{2}$.

\begin{figure}[!h]
\minipage{0.32\textwidth}
  \includegraphics[width=\linewidth]{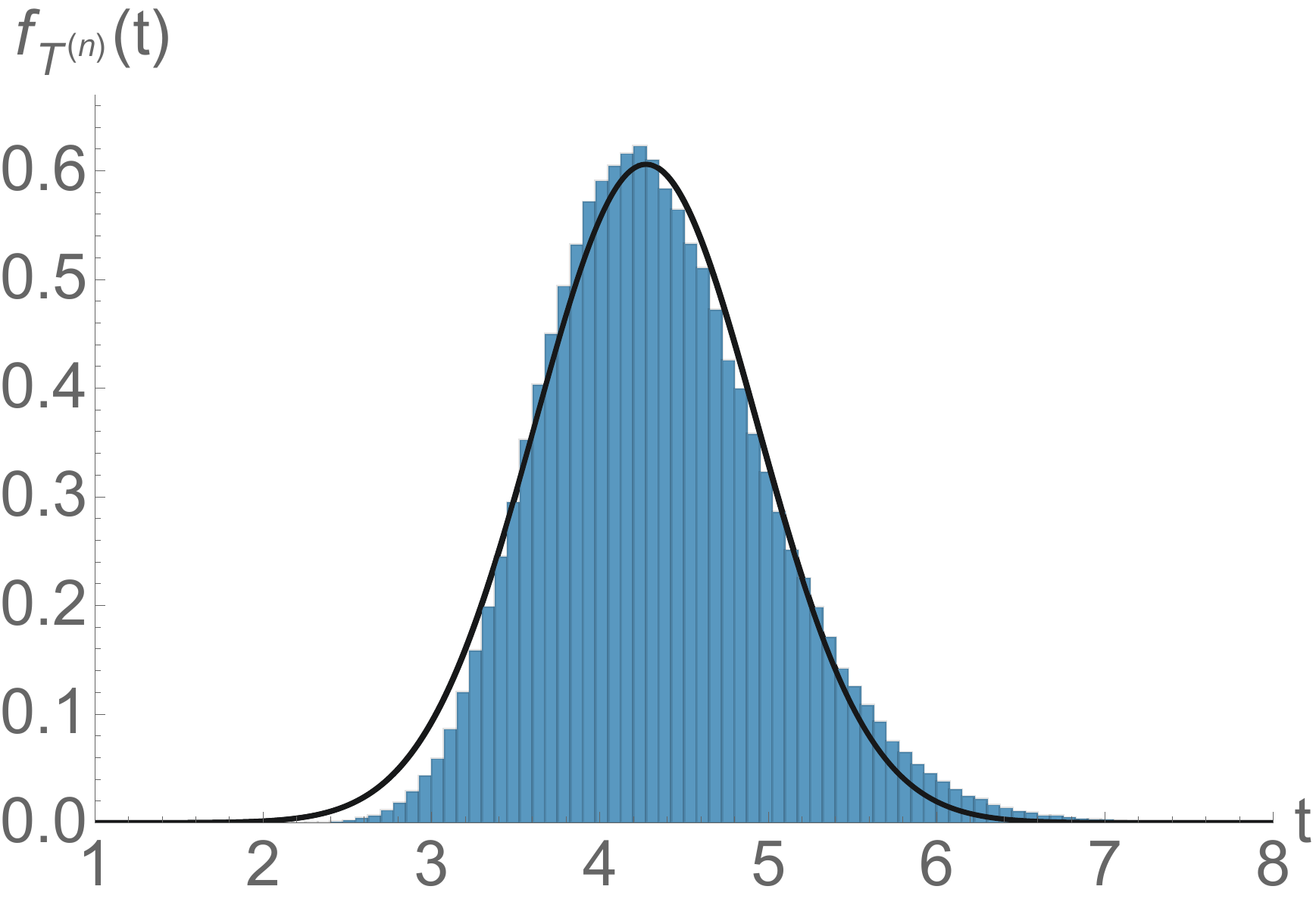}
  \caption*{\begin{small}$
  \eta=0$
  \end{small}}
\endminipage\hfill
\minipage{0.32\textwidth}
  \includegraphics[width=\linewidth]{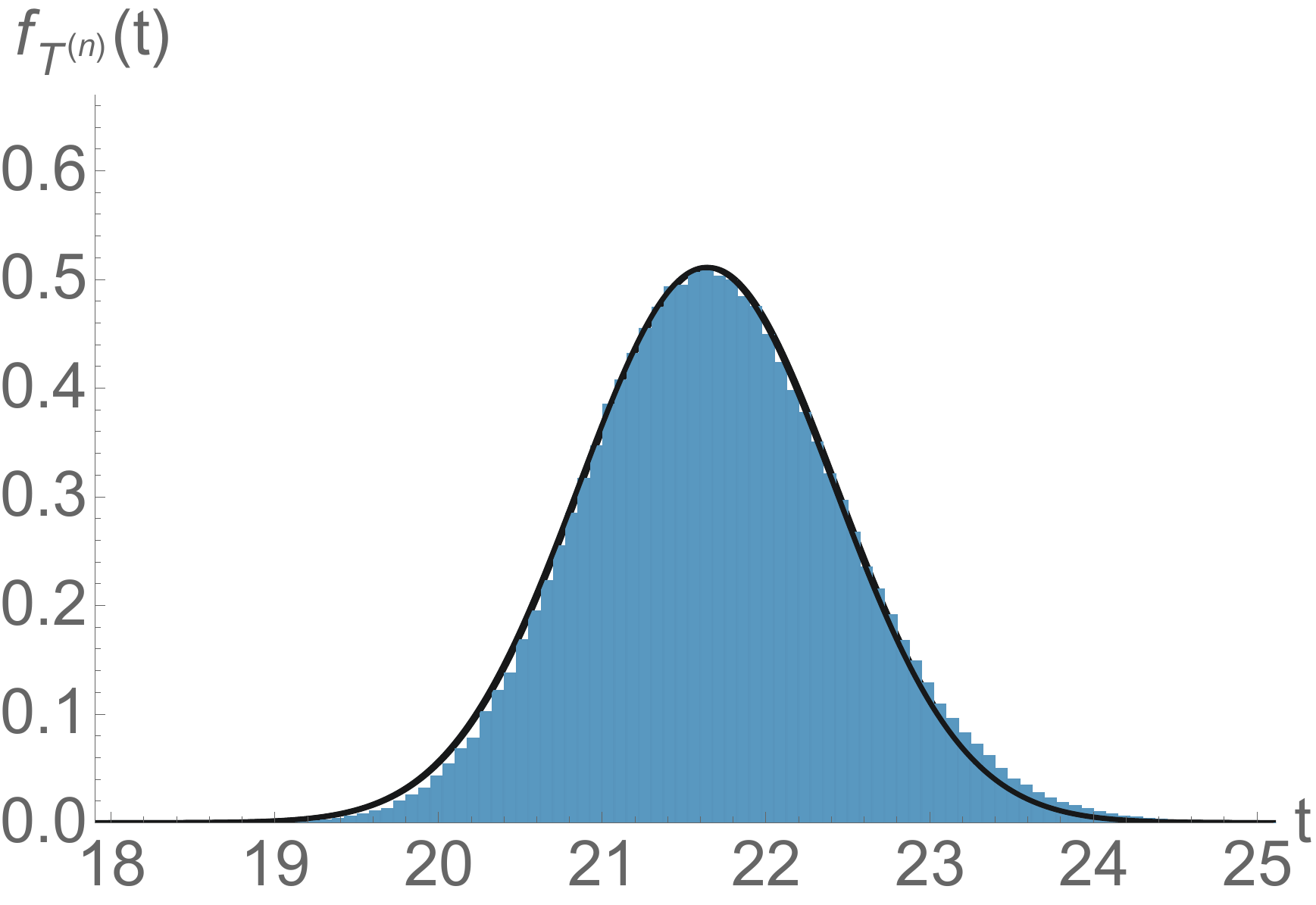}
  \caption*{\begin{small}$
  \eta=\frac{1}{4}$
  \end{small}}
\endminipage\hfill
\minipage{0.32\textwidth}%
  \includegraphics[width=\linewidth]{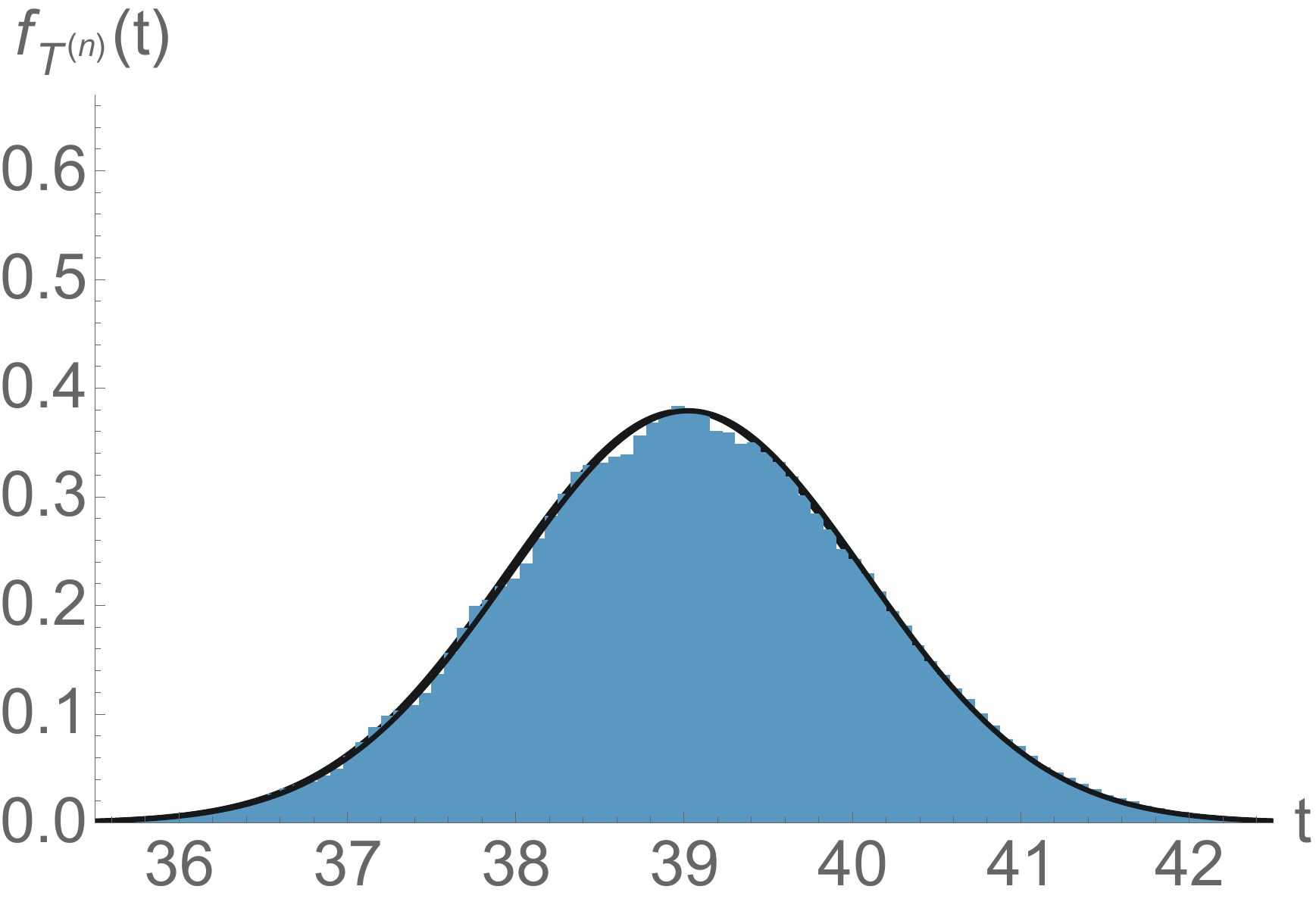}
  \caption*{\begin{small}$
  \eta=\frac{1}{2}$
  \end{small}}
\endminipage
\caption{Density of $T^{(n)}$ for $n=1500$ and $R=30$ for different $\eta$: analysis vs. simulation.}\label{fig:T2}
\end{figure}
  
In Figure \ref{fig:T2} we consider the case $R=30$ and $n=1500$. Also here, we find that the end-to-end delay distribution is approximated well by the normal distribution, although in this case the actual distribution is more skewed. Additionally, we see that in this dense case the Trickle algorithm benefits even more from choosing small $\eta$, giving more than a nine-fold decrease in end-to-end delay when setting $\eta=0$ as opposed to $\eta=1/2$. Finally, as again predicted by Figure \ref{fig:2} the variance indeed decreases as $\eta$ approaches 0. Hence, here setting $\eta=0$ is always preferable to $\eta=\frac{1}{2}$.
  \begin{figure}[!h]\center
\minipage{0.4\textwidth}\center
  \includegraphics[width=0.8\linewidth]{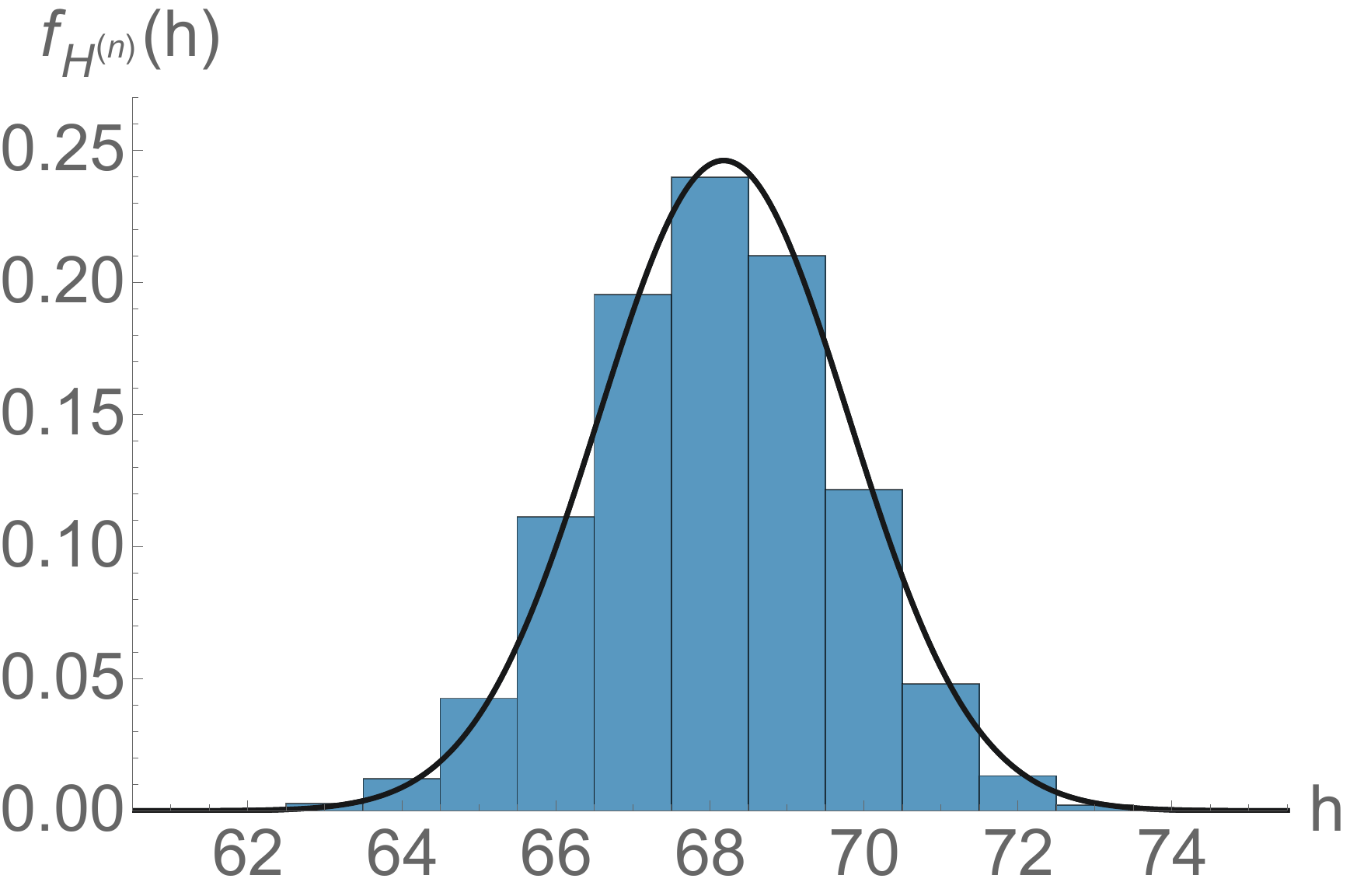}
  \caption*{\begin{small}$
  R=5, n=250$
  \end{small}}
\endminipage
\minipage{0.4\textwidth}\center
  \includegraphics[width=0.8\linewidth]{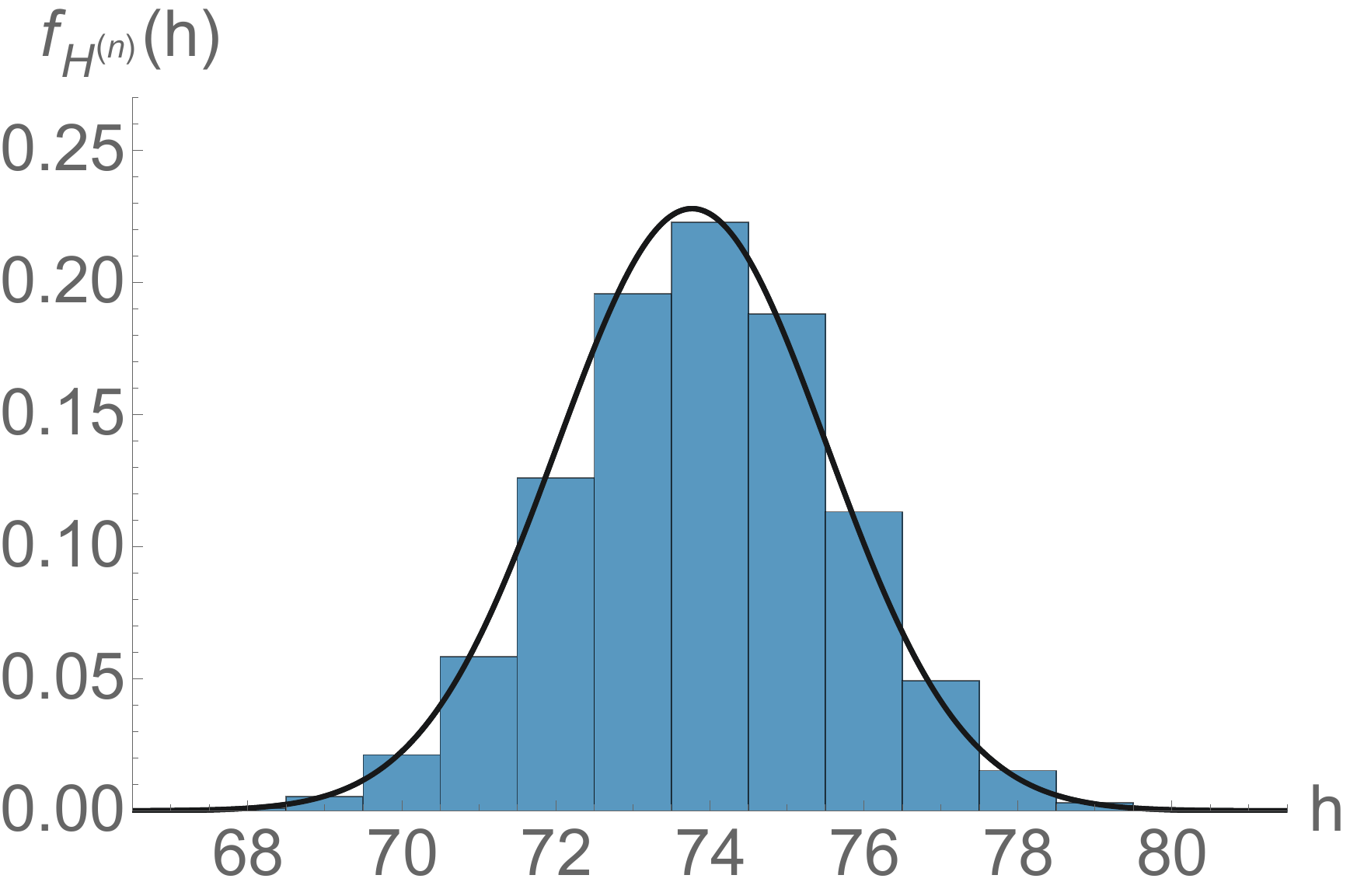}
  \caption*{\begin{small}{$R=30, n=1500$}
  \end{small}}
\endminipage
\caption{Density of $H^{(n)}$ for $R=5$ and $R=30$: analysis vs. simulation.}\label{fig:H}
\end{figure}

Lastly, in Figure \ref{fig:H}, we compare the obtained histograms of the hop-count distributions with the asymptotic result of \eqref{hlim} for both the cases $R=5$ and $R=30$. Also here we find a good match between the simulation and analytical results. Also the increase in variance as $R$ grows is visible.

\section{Hop count and end-to-end delay distribution}\label{gen}
We now focus on deriving the probability generating function of $H^{(n)}$ and the moment generating function of $T^{(n)}$ for finite $n$. These allow us to calculate the exact moments of the hop count and end-to-end delay for small finite $n$, which is not covered by the asymptotic results for large $n$ in the previous section. 

Let us denote by $M_X(s)$ the moment generating function of a continuous variable $X$ and by $G_Y[z]$ the probability generating function of a discrete random variable $Y$. Additionally, for a pair $(X_1,X_2)$ of discrete variables, write
\[G_{(X_1,X_2)}[z_1,z_2]=\sum_{n=0}^\infty\sum_{m=0}^\infty z_1^n z_2^m \mathbb{P}[X_1=n,X_2=m],\]
and for a pair $(X,Y)$, with $X$ a discrete and $Y$ a continuous random variable, write
\[G_{(X,Y)}[z,s]=\int_{t=0}^\infty\sum_{n=0}^\infty z^n e^{st} \text{d}\mathbb{P}[X=n,Y \leq t].\]
We first analyze the probability generating function of $H^{(n)}$.
\subsection{Hop count}
Let $A(m)=\sum_{i=1}^m U_i$, i.e. $A(m)$ is the total number of updated nodes after $m$ transmissions, not including node 0. Furthermore, let $\eta_{ij}$ be the number of transitions in the Markov chain $\{U_m\}_{m=0}^\infty$ between entering state $i$ and entering state $j$ for the first time. Denote by $A(\eta_{ij})$ the number of nodes updated during that time.
\begin{thm}
\[
\mathcal{H}[z_1,z_2]= \sum_{n=0}^\infty \left(\sum_{m=0}^\infty \mathbb{P}\left[H^{(n)}=m\right]z_1^{m}\right)z_2^{n}=\sum_{n=0}^\infty G_{H^{(n)}}[z_1]z_2^{n}
\]
\[=\frac{(z_1-1)z_2}{1-z_2}\left(1+\sum_{j=1}^R\frac{G_{\left(A\left(\eta_{1,j}\right), \eta_{1,j}\right)}[z_1,z_2]}{1-G_{\left(A\left(\eta_{j,j}\right), \eta_{j,j}\right)}[z_1,z_2]}\right)+\frac{1}{1-z_2}.\]
\textbf{Remark}. Note that the probability generating function for $H^{(n)}$ can be obtained by differentiating $\mathcal{H}[z_1,z_2]$:
\begin{equation}\label{MGF_H}G_{H^{(n)}}[z]=\frac{1}{n!}\frac{\text{d}^n}{\text{d}z_2^n}\mathcal{H}[z,z_2]\bigg|_{z_2=0}.\end{equation}
\end{thm}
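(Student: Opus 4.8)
The plan is to decompose a propagation event according to the successive visits of the Markov chain $\{U_m\}$ to its states, and to track simultaneously the number of hops and the number of updated nodes by a bivariate generating function. Fix $n$. By definition $H^{(n)}$ is the first hop $m$ at which $A(m)=\sum_{i=1}^m U_i\geq n$. The key observation is that $\mathbb{P}[H^{(n)}>m]=\mathbb{P}[A(m)<n]=\mathbb{P}[A(m)\leq n-1]$, so that the double sum $\sum_{n}\sum_{m}\mathbb{P}[H^{(n)}=m]z_1^m z_2^n$ can be rewritten, via Abel summation in $n$, in terms of $\sum_{n}\sum_{m}\mathbb{P}[A(m)\leq n]z_1^m z_2^n$, and hence in terms of the joint generating function of the pair $(A(m),m)$ accumulated over all $m$. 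Concretely, $\sum_{n\geq 0} G_{H^{(n)}}[z_1]z_2^n$ will reduce, after the telescoping, to $\tfrac{1}{1-z_2}$ plus $\tfrac{(z_1-1)z_2}{1-z_2}$ times $\sum_{m\geq 0}\mathbb{E}[z_1^m z_2^{A(m)}]$ (the term $+1$ inside the large parentheses accounting for the $m=0$, $A(0)=0$ contribution). So the whole problem is reduced to computing the generating function $\Phi[z_1,z_2]:=\sum_{m\geq 0}\mathbb{E}[z_1^{m}z_2^{A(m)}]$ of the Markov additive process $(m,A(m))$.

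For $\Phi$ I would use a first-passage / renewal decomposition of the trajectory of $\{U_m\}$. Starting from $U_0$ (taken stationary, or from any fixed state — it does not matter for the generating-function identity, only the accounting of the initial state does), the chain first reaches some state $j$, and thereafter returns to $j$ repeatedly; between the $k$-th and $(k+1)$-st visit to $j$ the contribution to $(m,A(m))$ is an i.i.d. copy of the excursion weight $G_{(A(\eta_{j,j}),\eta_{j,j})}[z_2,z_1]$ (in the paper's convention the first argument of $G_{(X,Y)}$ is the one paired with $z_1$, so one must be careful to match: here the "discrete variable counted by $z_1$" is $A$ and the one counted by $z_2$ — wait, re-reading the statement, it is $G_{(A(\eta_{1,j}),\eta_{1,j})}[z_1,z_2]$, so $z_1$ is paired with $A$ and $z_2$ with the hop count; I will adopt exactly that convention throughout). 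Summing the geometric series over the number of returns to $j$ gives the factor $1/(1-G_{(A(\eta_{j,j}),\eta_{j,j})}[z_1,z_2])$, while the initial segment from the start until the first entry into $j$ contributes $G_{(A(\eta_{1,j}),\eta_{1,j})}[z_1,z_2]$ — using that, under the stationary/initial convention adopted in the paper (where one effectively starts the bookkeeping as if from state $1$, consistent with $U_0=1$), the first-passage generating functions into each state $j$ are exactly the $\eta_{1,j}$ quantities. One then has to argue that summing over the choice of $j$ — i.e. decomposing every infinite trajectory uniquely by "the state it is currently sitting in after hop $m$" — is legitimate: each pair $(m,A(m))$ is counted once, indexed by $U_m=j$, and conditioning on $U_m=j$ the past splits as (first passage $1\to j$) followed by (some number of $j\to j$ excursions) followed by (a partial excursion out of $j$). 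The partial-excursion tail sums to $1/(1-z_2)\cdot(\text{something})$, but when one sums the \emph{completed}-excursion generating function over all $m$ one gets cleanly $\sum_{j}\frac{G_{(A(\eta_{1,j}),\eta_{1,j})}}{1-G_{(A(\eta_{j,j}),\eta_{j,j})}}$; I would double-check whether the partial last excursion is absorbed into the prefactor $\tfrac{(z_1-1)z_2}{1-z_2}$ or requires the explicit $+1$ and the $\tfrac{1}{1-z_2}$ term, and reconcile the bookkeeping so that the final formula matches.

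Putting the two pieces together yields exactly
\[
\mathcal{H}[z_1,z_2]=\frac{(z_1-1)z_2}{1-z_2}\left(1+\sum_{j=1}^R\frac{G_{(A(\eta_{1,j}),\eta_{1,j})}[z_1,z_2]}{1-G_{(A(\eta_{j,j}),\eta_{j,j})}[z_1,z_2]}\right)+\frac{1}{1-z_2}.
\]
The remark then follows immediately: since $\mathcal{H}[z,z_2]=\sum_{n\geq 0}G_{H^{(n)}}[z]z_2^n$ is, for fixed $z$, an ordinary power series in $z_2$, its $n$-th Taylor coefficient at $z_2=0$ is $G_{H^{(n)}}[z]$, i.e. $G_{H^{(n)}}[z]=\tfrac{1}{n!}\tfrac{\mathrm{d}^n}{\mathrm{d}z_2^n}\mathcal{H}[z,z_2]\big|_{z_2=0}$, which is \eqref{MGF_H}; one should note in passing that convergence of the $z_2$-series holds for $|z_2|$ small (the coefficients $G_{H^{(n)}}[z]$ are bounded in modulus by a constant times $|z|^{\,\text{const}\cdot n}$ for $|z|\le 1$ plus polynomial factors, since $H^{(n)}\le n$), which justifies the Taylor expansion.

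I expect the main obstacle to be the combinatorial bookkeeping of the excursion decomposition: getting the prefactor $\tfrac{(z_1-1)z_2}{1-z_2}$, the stray $+1$, and the additive $\tfrac{1}{1-z_2}$ exactly right, which requires carefully separating (i) the Abel-summation step turning $\{\mathbb{P}[A(m)\le n]\}$ into $\{G_{H^{(n)}}\}$, (ii) the handling of the final incomplete excursion of the chain at "time $m$", and (iii) the treatment of the $n=0$ and $m=0$ boundary terms. None of these steps is deep, but they are where a sign error or an off-by-one would hide, so that is where I would spend the care. Everything else — the geometric series for repeated returns, the Markov (strong Markov) property giving independence of the excursions, and the identification of the first-passage pieces with the $\eta_{1,j}$ and $\eta_{j,j}$ generating functions — is standard renewal-theoretic reasoning for Markov additive processes.
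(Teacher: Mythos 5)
Your proposal follows essentially the same route as the paper's proof: the identity $\mathbb{P}[H^{(n)}>m]=\mathbb{P}[A(m)<n]$, the $\tfrac{1}{1-z_2}$ factor arising from summing the tail probabilities $\mathbb{P}[A(m)\leq n]$ over $n$, and the first-passage/return-time decomposition over the current state $U_m=j$ with a geometric series over the number of returns, exactly as in the paper. The one point you flag as uncertain --- the handling of a ``partial last excursion'' --- is in fact a non-issue: since you only evaluate at times $m$ with $U_m=j$, and these are precisely the entrance times $\eta_j^{(k)}$, every contribution splits into a completed first-passage segment $1\to j$ followed by completed $j\to j$ excursions, with the lone $+1$ inside the parentheses coming from the $m=0$ term.
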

\begin{proof}
In \cite{renewal5} results are provided which lead to explicit expressions for relevant Laplace transforms of general reward functions for Markov renewal and semi-Markov processes. The following proof closely follows the steps of their proof, but is slightly customized for the case at hand, simplifying the analysis.
 
First, we write
\[\mathbb{P}\left[H^{(n+1)}>m\right]=\sum_{j=1}^R\mathbb{P}\left[A(m)\leq n, U_m=j\mid U_0=1\right].\]
Now, let $\eta_j^{(k)}$ be the time of the $k$'th entrance into state $j$ of the Markov chain $\boldsymbol{U}$, with $\eta_j^{(0)}=0$, that is,
\[\eta_j^{(k)}=\inf\{i>\eta_j^{(k-1)}: U_i=j\}.\]
Then
\[\mathbb{P}[H^{(n+1)}>m]=\mathbbm{1}[m=0]+\sum_{j=1}^R\sum_{k=1}^{\infty}\mathbb{P}\left[A\left(\eta_j^{(k)}\right)\leq n, \eta_j^{(k)}=m \mid U_0=1\right].\]
Substituting and using the fact that $\frac{1}{1-z}\sum_{n=0}^\infty a_n z^n=\sum_{n=0}^\infty \sum_{m=0}^{n} a_m z^n$, we find
\[\sum_{m=0}^\infty \sum_{n=0}^\infty \mathbb{P}[H^{(n+1)}>m]z_1^m z_2^n\]
\[=\frac{1}{1-z_2}+\sum_{m=1}^\infty \sum_{n=0}^\infty\sum_{j=1}^R\sum_{k=1}^{\infty}\mathbb{P}\left[A\left(\eta_j^{(k)}\right)\leq n, \eta_j^{(k)}=m \mid U_0=1\right]z_1^m z_2^n\]
\[=\frac{1}{1-z_2}+\frac{1}{1-z_2}\sum_{m=1}^\infty \sum_{n=0}^\infty\sum_{j=1}^R\sum_{k=1}^{\infty}\mathbb{P}\left[A\left(\eta_j^{(k)}\right)=n, \eta_j^{(k)}=m \mid U_0=1\right]z_1^m z_2^n\]
\[
=\frac{1}{1-z_2}\left(1+\sum_{j=1}^R\sum_{k=1}^{\infty}G_{\left(A\left(\eta_j^{(k)}\right), \eta_j^{(k)}\right)}[z_1,z_2]\right).
\]
Since the consecutive entrance times to a fixed state form a sequence of regeneration times, we obtain for $k\geq1$ and $j=1$
\[
G_{\left(A\left(\eta_1^{(k)}\right), \eta_1^{(k)}\right)}[z_1,z_2]=\left(G_{\left(A\left(\eta_{1,1}\right), \eta_{1,1}\right)}[z_1,z_2]\right)^k,
\]
and for $j\neq 1$ we get
\[ 
G_{\left(A\left(\eta_j^{(k)}\right), \eta_j^{(k)}\right)}[z_1,z_2]=G_{\left(A\left(\eta_{1,j}\right), \eta_{1,j}\right)}[z_1,z_2]\left(G_{\left(A\left(\eta_{j,j}\right), \eta_{j,j}\right)}[z_1,z_2]\right)^{k-1}.
\]
Therefore
\[\sum_{m=0}^\infty \sum_{n=0}^\infty \mathbb{P}[H^{(n)}>m]z_1^m z_2^n
=\frac{z_2}{1-z_2}\left(1+\sum_{j=1}^R\sum_{k=1}^\infty G_{\left(A\left(\eta_{1,j}\right), \eta_{1,j}\right)}[z_1,z_2]\left(G_{\left(A\left(\eta_{j,j}\right), \eta_{j,j}\right)}[z_1,z_2]\right)^{k-1}\right)
\]
\[
=\frac{z_2}{1-z_2}\left(1+\sum_{j=1}^R\frac{G_{\left(A\left(\eta_{1,j}\right), \eta_{1,j}\right)}[z_1,z_2]}{1-G_{\left(A\left(\eta_{j,j}\right), \eta_{j,j}\right)}[z_1,z_2]}\right).\]
Finally we obtain
\[\sum_{m=0}^\infty \sum_{n=0}^\infty \mathbb{P}[H^{(n)}=m]z_1^m z_2^n=\frac{(z_1-1)z_2}{1-z_2}\left(1+\sum_{j=1}^R\frac{G_{\left(A\left(\eta_{1,j}\right), \eta_{1,j}\right)}[z_1,z_2]}{1-G_{\left(A\left(\eta_{j,j}\right), \eta_{j,j}\right)}[z_1,z_2]}\right)+\frac{1}{1-z_2}.\]
\end{proof}

Note that the functions $G_{\left(A\left(\eta_{i,j}\right), \eta_{i,j}\right)}[z_1,z_2]$ can be derived by solving the following system of linear equations:
\begin{equation}\label{G}G_{\left(A\left(\eta_{i,j}\right), \eta_{i,j}\right)}[z_1,z_2]=\sum_{k\neq j}p_{ik}z_1^k z_2 G_{\left(A\left(\eta_{k,j}\right), \eta_{k,j}\right)}[z_1,z_2]+p_{ij}z_1^j z_2,\text{ for all $i$ and $j$}.\end{equation}

\subsection{End-to-end delay}
Similarly, we will now consider the moment generating function for the end-to-end delay $T^{(n)}$. Let $B(t)=\sum_{i=1}^m U_i$ for $t\in[T_{m},T_{m+1})$. Then $B(t)$ is the total number of updated nodes at time $t$, not including node 0. Again let $\eta_{ij}$ be the number of transitions in the Markov chain $\{U_m\}_{m=0}^\infty$ between entering state $i$ and entering state $j$ for the first time and denote by $B(T_{\eta_{ij}})$ the number of nodes updated during that time. Denote by $\nu_j$ the random time the Markov chain stays in state $j$ before transitioning.
\begin{thm}
\begin{equation}\mathcal{T}[z,s]=\sum_{n=0}^\infty \left( \int_{t=0}^{\infty} e^{st}\textnormal{ d}\mathbb{P}[T^{(n)}\leq t]\right)z^{n}
=\sum_{n=0}^\infty M_{T^{(n)}}[s]z^{n}\nonumber
\end{equation}\[=\frac{z}{z-1}\left(1-M_{\nu_1}(s)+\sum_{j=1}^R\left(1-M_{\nu_j}(s)\right)\frac{G_{\left(B\left(T_{\eta_{1,j}}\right), T_{\eta_{1,j}}\right)}[z,s]}{1-G_{\left(B\left(T_{\eta_{j,j}}\right), T_{\eta_{j,j}}\right)}[z,s]}\right)+\frac{1}{1-z}.\]

\textbf{Remark.} Note that the moment generating function for $T^{(n)}$ can be obtained by differentiating $\mathcal{T}[z,s]$:
\begin{equation}\label{MGF_T}M_{T^{(n)}}[s]=\frac{1}{n!}\frac{\text{d}^n}{\text{d}z^n}\mathcal{T}[z,s]\bigg|_{z=0}.
\end{equation}
\end{thm}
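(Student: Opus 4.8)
The plan is to mimic the structure of the hop-count theorem that was just proved, replacing the discrete "reward" $z_1^{A(m)}$ counting updated nodes by the same generating variable $z$ and replacing the time reward $z_2^m$ by the continuous time reward $e^{sT}$. First I would write
\[
\mathbb{P}[T^{(n+1)}>t]=\sum_{j=1}^R\mathbb{P}\big[B(t)\leq n,\ U_{N(t)}=j\mid U_0=1\big],
\]
where $N(t)$ is the number of transitions up to time $t$, and then decompose the event according to the last entrance into state $j$ before time $t$. Since at time $t$ the chain is sitting in some state $j$ having already been there for a while, the natural decomposition is: enter state $j$ for the $k$-th time at time $T_{\eta_j^{(k)}}=:\sigma$, with $B(\sigma)\leq n$, and then stay in $j$ for a residual time exceeding $t-\sigma$. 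This gives a convolution of the joint transform up to the entrance time with the tail $\mathbb{P}[\nu_j>t-\sigma]$, whose Laplace transform contributes the factor $(1-M_{\nu_j}(s))/(-s)$ — but because we are transforming $\mathbb{P}[T^{(n+1)}>t]$ and then summing over $n$ with the identity $\frac{1}{1-z}\sum a_n z^n=\sum\sum_{m\le n}a_m z^n$ exactly as in the hop-count proof, the $1/(-s)$ and the generating-function bookkeeping will combine to produce the stated $\frac{z}{z-1}(1-M_{\nu_j}(s))$ prefactors.

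The key steps, in order, are: (1) express $\mathbb{P}[T^{(n+1)}>t]$ as a sum over states $j$ and over the index $k$ of the last entrance into $j$, isolating the indicator term $\mathbbm{1}[t<\nu_1]$ coming from $k=0$, $j=1$ (the chain starts in state $1$ and has not yet left it); (2) take the bivariate transform $\sum_{n}\sum$ over $t$ of $z^n e^{st}\,\mathrm d\mathbb{P}$, use the telescoping identity to turn the tail probabilities $\mathbb{P}[T^{(n)}>t]$ into the point masses $\mathbb{P}[B(T_{\eta_j^{(k)}})=n,\ldots]$, and factor the residual-sojourn convolution, which turns into $G_{(B(T_{\eta_j^{(k)}}),T_{\eta_j^{(k)}})}[z,s]$ times the Laplace transform of $\mathbb{P}[\nu_j>\cdot]$; (3) invoke the regeneration property at entrance times to state $j$ — exactly as in the hop-count proof — to get the geometric-series identity $\sum_{k\ge1}G_{(\cdot,\eta_{1,j})}[z,s]\,G_{(\cdot,\eta_{j,j})}[z,s]^{k-1}=G_{(B(T_{\eta_{1,j}}),T_{\eta_{1,j}})}[z,s]/(1-G_{(B(T_{\eta_{j,j}}),T_{\eta_{j,j}})}[z,s])$; (4) collect the $1/(1-z)$ and sign factors and differentiate to the stated form, finally noting that $M_{T^{(n)}}[s]=\frac{1}{n!}\frac{\mathrm d^n}{\mathrm dz^n}\mathcal{T}[z,s]\big|_{z=0}$ by the definition of $\mathcal{T}$ as a power series in $z$. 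I would also record, in parallel to \eqref{G}, the linear system
\[
G_{(B(T_{\eta_{i,j}}),T_{\eta_{i,j}})}[z,s]=\sum_{k\neq j}M_{\nu_i}(s)\,p_{ik}\,z^{k}\,G_{(B(T_{\eta_{k,j}}),T_{\eta_{k,j}})}[z,s]+M_{\nu_i}(s)\,p_{ij}\,z^{j},
\]
which lets one actually compute the transforms, using that the sojourn time $\nu_i$ (an inter-transmission time given $U=i$) is independent of the next state and has the distribution in \eqref{nu}.

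The main obstacle I expect is getting the residual-sojourn bookkeeping exactly right: at time $t$ the process is \emph{in the middle} of a sojourn in state $j$, so one must carefully split the sojourn at the last entrance time, handle the overshoot term $\mathbb{P}[\nu_j>t-\sigma]$, and make sure the Laplace transform of this tail ($(1-M_{\nu_j}(s))/(-s)$) combines correctly with the $\frac{1}{1-z}$ and the $z$-shift to yield precisely the $\frac{z}{z-1}(1-M_{\nu_j}(s))$ coefficients in the statement — in particular the seemingly paradoxical $\frac{z}{z-1}$ (rather than $\frac{z}{1-z}$) sign, which arises because we transform the survival function rather than the distribution function and then convert back. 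A secondary subtlety is the $k=0$ boundary term: only state $1$ can be occupied at time $0$, and its contribution is $\mathbb{P}[\nu_1>t]$, transforming to the lone $1-M_{\nu_1}(s)$ term inside the bracket and, after the telescoping identity, the stray $\frac{1}{1-z}$ outside it. Once these two boundary/residual issues are handled, the rest is a line-by-line transcription of the hop-count argument with $(z_1,z_2)\mapsto(z,\text{Laplace variable})$ and the discrete clock $m$ replaced by the continuous clock $T_m$.
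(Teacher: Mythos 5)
Your proposal follows essentially the same route as the paper's proof: the same decomposition of $\mathbb{P}[T^{(n+1)}>t]$ over the current state $j$ and the index $k$ of the entrance to $j$ whose sojourn contains $t$, the same isolation of the $k=0$ term $\mathbb{P}[\nu_1>t]$, the same convolution with the residual-sojourn tail whose transform yields the $(1-M_{\nu_j}(s))$ factors, the same regeneration/geometric-series step, and the same companion linear system for the transforms $G_{(B(T_{\eta_{i,j}}),T_{\eta_{i,j}})}[z,s]$. The sign bookkeeping you flag (survival function versus distribution function, producing $\tfrac{z}{z-1}$ rather than $\tfrac{z}{1-z}$) is indeed the only delicate point, and you resolve it correctly.
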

\begin{proof}
Again, our proof closely resembles the proof given in \cite{renewal5}. Let $U(t)=U_m$ for $t\in[T_{m},T_{m+1})$. We write
\[\mathbb{P}\left[T^{(n+1)}>t\right]=\sum_{j=1}^R\mathbb{P}\left[B(t)\leq n, U(t)=j\mid U_0=1\right].\]
Again let $\eta_j^{(k)}$ be the time of the $k$'th entry into state $j$ of the Markov chain $\boldsymbol{U}$, with $\eta_j^{(0)}=0$, that is,
\[\eta_j^{(k)}=\inf\{i>\eta_j^{(k-1)}: U_i=j\}.\]
Then
\[\mathbb{P}[T^{(n+1)}>t]=\sum_{j=1}^R\sum_{k=0}^{\infty}\mathbb{P}\left[B\left(T_{\eta_j^{(k)}}\right)\leq n, T_{\eta_j^{(k)}}\leq t<T_{\eta_j^{(k)}+1}, U(t)=j \mid U(0)=1\right].\]
For convenience we write
\[b_j^k(n,t)=\mathbb{P}\left[B\left(T_{\eta_j^{(k)}}\right)\leq n, T_{\eta_j^{(k)}}\leq t<T_{\eta_j^{(k)}+1},U(t)=j\mid U(0)=1\right].\]
Note first that for $k=0$ we have
\[b_j^0(n,t)=\begin{cases}
  \mathbb{P}[\nu_1>t], & \text{if }j=1,\\
  0, & \text{otherwise}.
  \end{cases} \]
  For $k\geq 1$ we can write,
  \[b_j^k(n,t)=\int_{u=0}^\infty\sum_{l=0}^\infty \mathbb{P}[\nu_j>t-u]\mathbbm{1}[n-l\geq 0]\text{d}\mathbb{P}\left[B\left(T_{\eta_j^{(k)}}\right)=l, T_{\eta_j^{(k)}}\leq u\right],\]
  which can be written as a convolution of a function $\phi(u,l)$ and a probability measure:
  \[b_j^k(n,t)=\int_{u=0}^\infty\sum_{l=0}\phi_j(t-u,n-l)\text{d}\mathbb{P}\left[B\left(T_{\eta_j^{(k)}}\right)=l, T_{\eta_j^{(k)}}\leq u\right],\]
  where
  \[\phi_j(u,l)=\mathbb{P}[\nu_j>u]\mathbbm{1}[l\geq 0].\]
  Consequently,
  \[\int_{t=0}^\infty \sum_{n=0}^\infty b_j^k(n,t)z^n e^{s t}\text{d}t=\frac{1-M_{\nu_j}(s)}{s(1-z)}G_{(B(T_{\eta^{(k)}_{j}}), T_{\eta^{(k)}_{j}})}[z,s].\]
Hence,
\[\int_{t=0}^\infty \sum_{n=0}^\infty \mathbb{P}[T^{(n+1)}>t]z^n e^{st}\text{d}t =
 \int_{t=0}^\infty \sum_{n=0}^\infty \mathbb{P}[\nu_1>t]z^n e^{st}\text{d}t+ 
 \int_{t=0}^\infty \sum_{n=0}^\infty\sum_{j=1}^R\sum_{k=1}^{\infty}b_j^k(n,t)z^ne^{st}\text{d}t \]
\[=\frac{1}{s(1-z)}\left(1-M_{\nu_1}(s)+\sum_{j=1}^R\sum_{k=1}^{\infty}\left(1-M_{\nu_j}(s)\right)G_{(B(T_{\eta^{(k)}_{j}}), T_{\eta^{(k)}_{j}})}[z,s]\right).\]
Since the consecutive entry times to a fixed state form a sequence of regeneration times, we obtain for $k\geq 1$ and $j=1$
\[ 
G_{(B(T_{\eta^{(k)}_{1}}), T_{\eta^{(k)}_{1}})}[z,s]=\left(G_{\left(B\left(T_{\eta_{1,1}}\right), T_{\eta_{1,1}}\right)}[z,s]\right)^k,
\]
and for $j\neq 1$ we get
\[
G_{(B(T_{\eta^{(k)}_{j}}), T_{\eta^{(k)}_{j}})}[z,s]=G_{\left(B\left(T_{\eta_{1,j}}\right), T_{\eta_{1,j}}\right)}[z,s]\left(G_{\left(B\left(T_{\eta_{j,j}}\right), T_{\eta_{j,j}}\right)}[z,s]\right)^{k-1}.
\]
Therefore,
\[
\int_{t=0}^\infty \sum_{n=0}^\infty \mathbb{P}[T^{(n)}>t]z^n e^{st}\text{d}t\]
\[=\frac{z}{s(1-z)}\left(1-M_{\nu_1}(s)+\sum_{j=1}^R\sum_{k=1}^{\infty}\left(1-M_{\nu_j}(s)\right)G_{\left(B\left(T_{\eta_{1,j}}\right), T_{\eta_{1,j}}\right)}[z,s]\left(G_{\left(B\left(T_{\eta_{j,j}}\right), T_{\eta_{j,j}}\right)}[z,s]\right)^{k-1}\right)
\]\[=
\frac{z}{s(1-z)}\left(1-M_{\nu_1}(s)+\sum_{j=1}^R\left(1-M_{\nu_j}(s)\right)\frac{G_{\left(B\left(T_{\eta_{1,j}}\right), T_{\eta_{1,j}}\right)}[z,s]}{1-G_{\left(B\left(T_{\eta_{j,j}}\right), T_{\eta_{j,j}}\right)}[z,s]}\right).
\]
Finally we obtain
\[
\sum_{k=0}^\infty \left( \int_{t=0}^{\infty} e^{st}\text{ d}\mathbb{P}[T^{(k)}\leq t]\right)z^{k}\]
\[=\frac{z}{z-1}\left(1-M_{\nu_1}(s)+\sum_{j=1}^R\left(1-M_{\nu_j}(s)\right)\frac{G_{\left(B\left(T_{\eta_{1,j}}\right), T_{\eta_{1,j}}\right)}[z,s]}{1-G_{\left(B\left(T_{\eta_{j,j}}\right), T_{\eta_{j,j}}\right)}[z,s]}\right)+\frac{1}{1-z}.
\]
\end{proof}

Using (\ref{nu}) one can deduce that $M_{\nu_j}(s)$ is the moment generating function of a $\beta(1,j)$ random variable, which can be expressed in terms of the incomplete gamma function $\Gamma[s,x]$ as follows
\begin{equation}
M_{\nu_j}(s)=j! e^s\left(\frac{1}{s(1-\eta)}\right)^{j}\left(1-\frac{\Gamma[j,s(1-\eta)]}{(j-1)!}\right)
\end{equation}
Furthermore, analogous to Equation \eqref{G}, the functions $G_{\left(B\left(T_{\eta_{1,j}}\right), T_{\eta_{1,j}}\right)}[z,s]$ can be derived by solving the following system of equations:
\begin{equation}
G_{\left(B\left(T_{\eta_{i,j}}\right), T_{\eta_{i,j}}\right)}[z,s]=\sum_{k\neq j}p_{ik}z^k M_{\nu_i}[s] G_{\left(B\left(T_{\eta_{k,j}}\right), T_{\eta_{k,j}}\right)}[z,s]+p_{ij}z^j M_{\nu_i}[s],\text{ for all $i$ and $j$}.
\end{equation}

As illustrating examples, in Figure \ref{fig:3} we have plotted the density functions of $H^{(20)}$ and $T^{(20)}$ for $R=4$ and both $\eta=0$ and $\eta=\frac{1}{2}$ obtained by inverting Equations \eqref{MGF_H} and \eqref{MGF_T} using Mathematica. Note that the hop count distribution is the same for both settings.
\begin{figure}[!h]
\minipage{0.33\textwidth}
  \includegraphics[width=\linewidth]{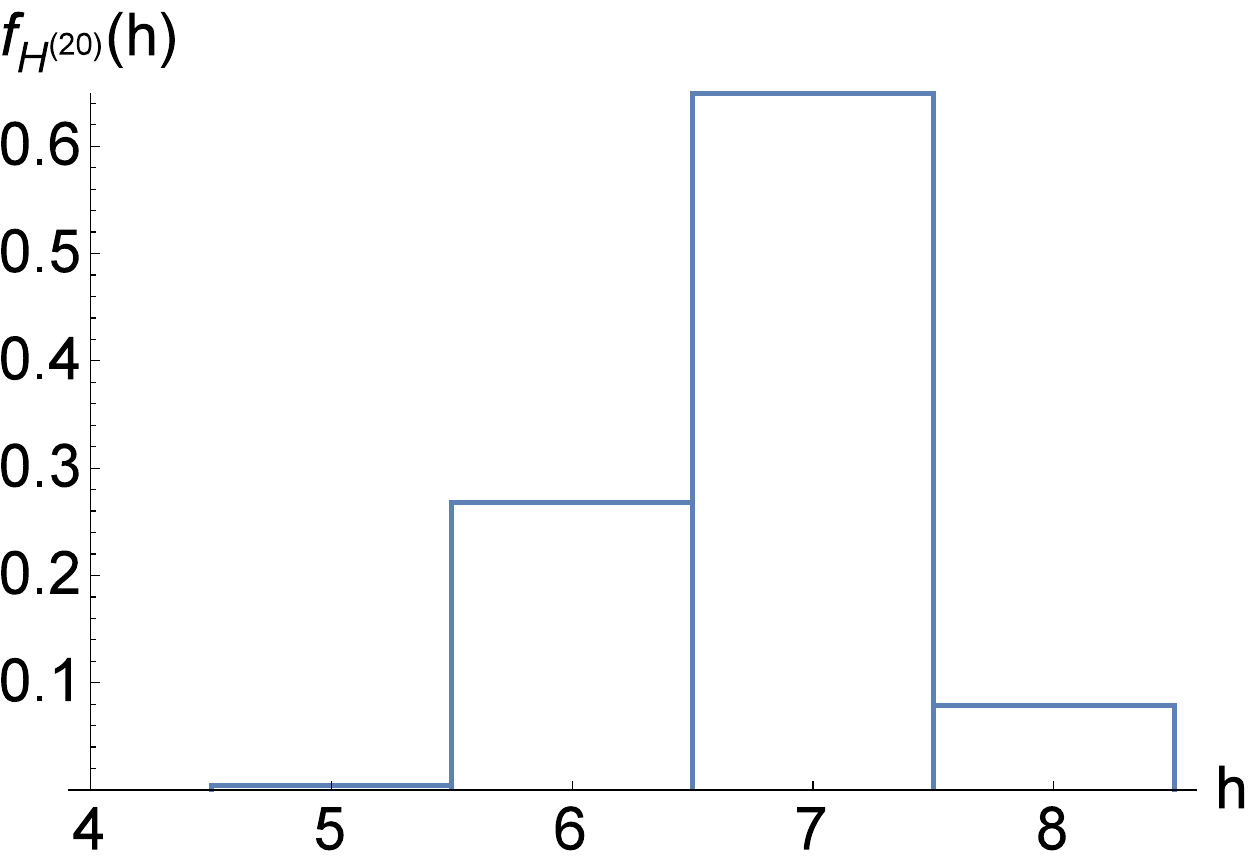}
  \caption*{\begin{small}Hop count
  \end{small}}
\endminipage%
\minipage{0.33\textwidth}%
  \includegraphics[width=\linewidth]{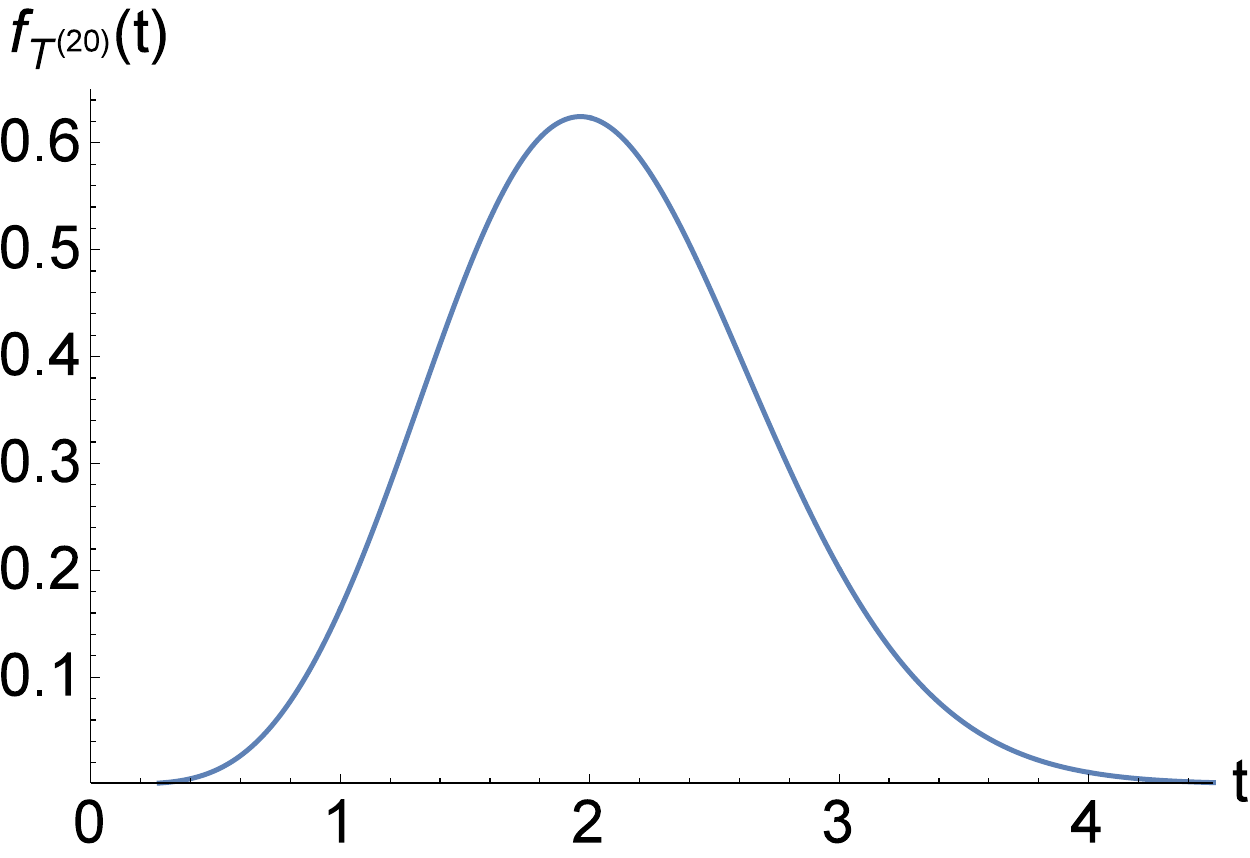}
  \caption*{\begin{small}End-to-end delay ($\eta=0$)
  \end{small}}
\endminipage
\minipage{0.33\textwidth}%
  \includegraphics[width=\linewidth]{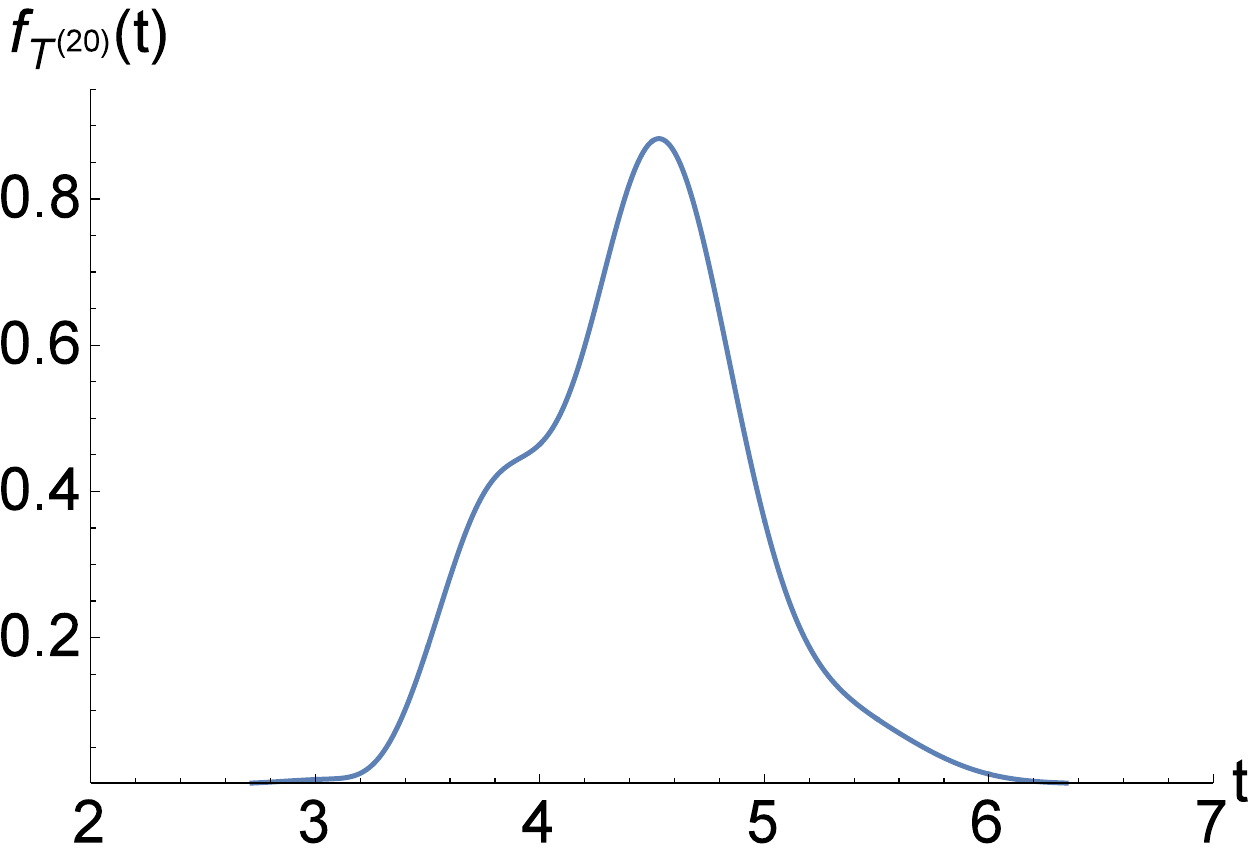}
  \caption*{\begin{small}End-to-end delay ($\eta=\frac{1}{2}$)
  \end{small}}
\endminipage
\caption{Hop count and end-to-end delay density for $n=20$ and $R=4$.}\label{fig:3}
\end{figure}
\section{Conclusion}\label{conclusion}
In this paper, we presented a generalized version of the Trickle algorithm with an additional parameter $\eta$, which allows us to set the length of a listen-only period for newly updated nodes. We argue that this parameter can greatly increase the speed at which the Trickle algorithm can disseminate data, while retaining scalability. These claims are supported by a mathematical analysis and simulations of a Trickle propagation event in line networks.

First, we provided an analysis of the hop count and end-to-end delay distribution for line networks consisting of $n$ nodes. We derived formulas for the mean and variance of the hop count and end-to-end delay as a function of $R$, $n$ and $\eta$, giving insight into the performance of the Trickle algorithm. Additionally, we showed that both distributions converge to a normal distribution as $n$ goes to infinity.

Secondly, we demonstrated how to derive explicit expressions for the probability and moment generating functions of the hop count and end-to-end delay. As was shown, these functions can be used to determine the respective density functions for small network sizes explicitly. 

From our analysis we can conclude that the generalized version of Trickle as presented in this paper with $\eta=0$ allows for better performance in terms of end-to-end delay, compared to the original description of Trickle. It greatly decreases end-to-end delay, while having only a small effect on its variability and the energy consumption of the network.

Finally, we note that our analysis only provides a first step towards a complete understanding of Trickle's propagation performance, since the analysis in this work is restricted to line networks. The impact of network topology and MAC-layer interactions on the performance of the Trickle algorithm remain as interesting topics for further research.

\appendix
\section{Proof of Theorem \ref{thm1}}\label{appA}
A result for first passage times of Markov renewal processes implies (see \cite{renewal2}, Theorem 3.4)
\[\text{Var}\left[H^{(n)}\right]\sim n\gamma_U^2/\mu_U^3\text{, as $n\rightarrow 
\infty$}.\]
Therefore we need to show that $\gamma_U^2=\frac{1}{54}(R^2+R-2)$. To simplify the analysis we will assume stationarity of the Markov chain $\{U_i\}_{i=0}^\infty$, but again note that the final result will also hold for the non-stationary case $U_0=1$. We show by induction that $\text{Cov}[U_0,U_{j}]=\left(-\frac{1}{2}\right)^j\frac{1}{18}(R^2+R-2)$. First note that 
\[\text{Cov}[U_0,U_{j}]=\mathbb{E}[U_0 U_{j}]-\mu_U^2=\mathbb{E}[U_0 U_{j}]-\frac{1}{9}(2R+1)^2.\]
For $j=0$ we find
\[\mathbb{E}[U_0^2]=\sum_{k=1}^R\pi_k k^2=\frac{2}{R(R+1)}\sum_{k=1}^Rk^3=\frac{1}{2}R(1+R).\]
Hence $\text{Cov}[U_0,U_0]=\frac{1}{2}R(1+R)-\frac{1}{9}(2R+1)^2=\frac{1}{18}(R^2+R-2)$, which is our induction basis. Now let $p_{ij}^{(k)}$ be the probability that starting from state $i$ the Markov chain is in state $j$ after $k$ steps. Then we can write
\[\mathbb{E}[U_0U_{j}]=\sum_{k=1}^R \pi_k \left( \sum_{l=1}^R p_{kl}^{(j)} kl \right)=\sum_{k=1}^R \pi_k \left( \sum_{m=1}^R p_{km}^{(j-1)}\left(\sum_{l=1}^R p_{ml} kl \right)\right)\]
\[=\sum_{k=1}^R \pi_k \left( \sum_{m=1}^R p_{km}^{(j-1)}\left(\sum_{l=R-m+1}^R \frac{1}{m} kl \right)\right)=\sum_{k=1}^R \pi_k \left( \sum_{m=1}^R p_{km}^{(j-1)}\frac{1}{2}(2R+1-m)k \right)\]
\[=\frac{1}{2}(2R+1)\sum_{k=1}^R \pi_k k-\frac{1}{2}\sum_{k=1}^R \pi_k \left( \sum_{m=1}^R p_{km}^{(j-1)}mk \right)=\frac{1}{6}(2R+1)^2-\frac{1}{2}\mathbb{E}[U_0U_{j-1}].\]
Consequently, we find
\[\text{Cov}[U_0,U_{j}]=-\frac{1}{2}\mathbb{E}[U_0U_{j-1}]+\frac{1}{18}(2R+1)^2=\left(-\frac{1}{2}\right)\text{Cov}[U_0,U_{j-1}].\]
Using this result it is easy to see that
\[\gamma_U^2=\frac{1}{18}(R^2+R-2)+2\sum_{j=1}^\infty\left(-\frac{1}{2}\right)^j\frac{1}{18}(R^2+R-2)=\frac{1}{54}(R^2+R-2),\] 
which completes the proof.

\section{Proof of Theorem \ref{thm2}}\label{appB}
A result for stopped functionals of Markov renewal processes implies (see \cite{renewal3}, Theorem 2)
\[\text{Var}\left[T^{(n)}\right]\sim n\gamma_T^2/\mu_U^3\text{, as $n\rightarrow 
\infty$}.\]
Now, rewriting \eqref{gammaT} we have
\[
\gamma_T^2=\text{Var}[U_0\mu_\theta-\theta_1\mu_U]+2\sum_{i=1}^\infty\text{Cov}[U_0\mu_\theta-\theta_1\mu_U,U_i\mu_\theta-\theta_{i+1}\mu_U]=\mu_\theta^2\gamma_U^2+\mu_U^2\gamma_\theta^2-2\mu_U\mu_\theta\Delta.\]
Here $\gamma_\theta$ and $\Delta$ are defined as
\[
\gamma_\theta^2=\lim_{m\rightarrow \infty}\frac{1}{m}\textnormal{Var}\left[T_{m+1}\right]=\text{Var}[\theta_1]+2 \sum_{j=1}^\infty \textnormal{Cov}[\theta_1,\theta_{j+1}],
\]
\[\Delta=\text{Cov}[\theta_1,U_0]
+\sum_{j=1}^\infty \textnormal{Cov}[U_0,\theta_{j+1}]+\sum_{j=1}^\infty \textnormal{Cov}[\theta_1,U_{j}].\]
For simplicity of the analysis, assume again stationarity of the Markov chain $\{U_i\}_{i=0}^\infty$. An expression for $\gamma_\theta^2$ in terms of the matrix $M$ and the fundamental matrix $Z=(I-P+\boldsymbol{1}\boldsymbol{\pi})^{-1}$ is given in \cite{renewal4}, that is
\[\gamma_\theta^2=\text{Var}[\theta_1]+2\boldsymbol{\pi}MZM\boldsymbol{1}-2\mu_\theta^2.\]
Here,
\[\text{Var}[\theta_1]=4(1-\eta)^2\left(\frac{6+R}{8+4R}-\left(\frac{2+R}{2R}-\frac{\sum_{j=1}^{R+1}\frac{1}{j}}{R(1+R)}\right)^2\right).\]
Finally, analogous to the proof of Theorem \ref{thm1}, we can show that $\textnormal{Cov}[\theta_1,U_{j}]=\left(-\frac{1}{2}\right)^j\textnormal{Cov}[\theta_1,U_0]$ and
\[\text{Cov}[\theta_1,U_0]=(1-\eta)\frac{(4R+8)\left(\sum_{j=1}^{R+1}\frac{1}{j}\right)-(R^2+9R+8)}{3R^2+3R}.\]
Now, since  $\pi_i p_{ij}=\pi_jp_{ji}$ for all $i$ and $j$, the Markov chain $\mathbf{U}$ is reversible (see \cite{kelly}, Theorem 1.2). This implies $(U_0,\theta_{j+1})\sim(\theta_1,U_{j})$ and hence we have $\textnormal{Cov}[U_0,\theta_{j+1}]=\textnormal{Cov}[\theta_1,U_{j}]$. This then yields
\[\Delta=\text{Cov}[\theta_1,U_0]
+\sum_{j=1}^\infty \textnormal{Cov}[U_0,\theta_{j+1}]+\sum_{j=1}^\infty \textnormal{Cov}[\theta_1,U_{j}]
=\text{Cov}[\theta_1,U_0]
+2\sum_{j=1}^\infty \textnormal{Cov}[\theta_1,U_{j}]\]\[=(1-\eta)\frac{(4R+8)\left(\sum_{j=1}^{R+1}\frac{1}{j}\right)-(R^2+9R+8)}{9R^2+9R}.
\]

\bibliographystyle{abbrv}
\bibliography{refs}
\end{document}